\documentclass[adraft]{eptcs}
\usepackage{amsmath,amssymb,amsthm,mathtools}
\usepackage{array}
\usepackage[section]{algorithm}
\usepackage{algpseudocode}
\usepackage{booktabs}
\usepackage{enumitem}
\usepackage{graphicx,subcaption}
\usepackage{hyperref}
\usepackage{listings}
\usepackage{multirow}
\usepackage[draft]{tikzit}

\tikzstyle{gate}=[shape=rectangle, text height=1.5ex, text depth=0.25ex, yshift=0.5mm, fill=white, draw=black, minimum height=5mm, yshift=-0.5mm, minimum width=5mm, font={\small}, tikzit category=circuit]
\tikzstyle{2-gate}=[shape=rectangle, text height=1.5ex, text depth=0.25ex, yshift=0.5mm, fill=white, draw=black, minimum height=10mm, yshift=-0.5mm, minimum width=5mm, font={\small}, tikzit category=circuit]
\tikzstyle{3-gate}=[shape=rectangle, text height=1.5ex, text depth=0.25ex, yshift=0.5mm, fill=white, draw=black, minimum height=15mm, yshift=-0.5mm, minimum width=5mm, font={\small}, tikzit category=circuit]
\tikzstyle{4-gate}=[shape=rectangle, text height=1.5ex, text depth=0.25ex, yshift=0.5mm, fill=white, draw=black, minimum height=20mm, yshift=-0.5mm, minimum width=5mm, font={\small}, tikzit category=circuit]
\tikzstyle{5-gate}=[shape=rectangle, text height=1.5ex, text depth=0.25ex, yshift=0.5mm, fill=white, draw=black, minimum height=25mm, yshift=-0.5mm, minimum width=5mm, font={\small}, tikzit category=circuit]
\tikzstyle{Z dot}=[inner sep=0mm, minimum size=2mm, shape=circle, draw=black, fill={rgb,255: red,221; green,255; blue,221}, tikzit category=zx]
\tikzstyle{Z phase dot}=[minimum size=5mm, font={\footnotesize\boldmath}, shape=rectangle, rounded corners=2mm, inner sep=0.2mm, outer sep=-2mm, scale=0.8, tikzit shape=circle, draw=black, fill={rgb,255: red,221; green,255; blue,221}, tikzit draw=blue, tikzit category=zx]
\tikzstyle{X dot}=[Z dot, shape=circle, draw=black, fill={rgb,255: red,255; green,136; blue,136}, tikzit category=zx]
\tikzstyle{X phase dot}=[Z phase dot, tikzit shape=circle, tikzit draw=blue, fill={rgb,255: red,255; green,136; blue,136}, font={\footnotesize\boldmath}, tikzit category=zx]
\tikzstyle{hadamard}=[fill=yellow, draw=black, shape=rectangle, inner sep=0.6mm, minimum height=1.5mm, minimum width=1.5mm, tikzit category=zx]
\tikzstyle{paulibox}=[fill={rgb,255: red,221; green,221; blue,255}, draw=black, shape=rectangle, inner sep=0.6mm, minimum height=5mm, minimum width=5mm, font={\footnotesize}, text height=1.5ex, text depth=0.25ex, tikzit category=zx]
\tikzstyle{vertex}=[inner sep=0mm, minimum size=1mm, shape=circle, draw=black, fill=black, tikzit category=misc]
\tikzstyle{vertex set}=[inner sep=0mm, minimum size=1mm, shape=circle, draw=black, fill=white, font={\footnotesize\boldmath}, tikzit category=misc]
\tikzstyle{small black dot}=[fill=black, draw=black, shape=circle, inner sep=0pt, minimum width=1.2mm, tikzit category=circuit]
\tikzstyle{cnot ctrl}=[fill=black, draw=black, shape=circle, inner sep=0pt, minimum width=1.2mm, tikzit category=circuit]
\tikzstyle{cnot targ}=[fill=white, draw=white, shape=circle, tikzit category=circuit, label={center:$\oplus$}, inner sep=0pt, minimum width=2.1mm, tikzit fill={rgb,255: red,102; green,204; blue,255}, tikzit draw=black]
\tikzstyle{ket}=[fill=white, draw=black, shape=regular polygon, regular polygon sides=3, regular polygon rotate=-30, scale=0.7, inner sep=1pt, tikzit category=circuit, tikzit shape=rectangle, tikzit fill=green]
\tikzstyle{bra}=[fill=white, draw=black, shape=regular polygon, regular polygon sides=3, regular polygon rotate=30, scale=0.7, inner sep=1pt, tikzit category=circuit, tikzit shape=rectangle, tikzit fill=red]
\tikzstyle{scalar}=[shape=rectangle, text height=1.5ex, text depth=0.25ex, yshift=0.5mm, fill=white, draw=black, minimum height=5mm, yshift=-0.5mm, minimum width=5mm, font={\small}]
\tikzstyle{clabel}=[fill=white, draw=none, shape=rectangle, tikzit fill={rgb,255: red,56; green,255; blue,242}, font={\footnotesize}, inner sep=1pt, tikzit category=labels]
\tikzstyle{empty diagram}=[draw={gray!40!white}, dashed, shape=rectangle, minimum width=1cm, minimum height=1cm, tikzit category=misc]
\tikzstyle{amap}=[fill=white, draw=black, shape=NEbox, tikzit category=asymmetric, tikzit fill=yellow, tikzit shape=rectangle]
\tikzstyle{amap conj}=[fill=white, draw=black, shape=NWbox, tikzit category=asymmetric, tikzit fill=green, tikzit shape=rectangle]
\tikzstyle{amap adj}=[fill=white, draw=black, shape=SEbox, tikzit category=asymmetric, tikzit fill=red, tikzit shape=rectangle]
\tikzstyle{amap trans}=[fill=white, draw=black, shape=SWbox, tikzit category=asymmetric, tikzit fill=orange, tikzit shape=rectangle]
\tikzstyle{astate}=[fill=white, draw=black, shape=NEtriangle, tikzit category=asymmetric, tikzit shape=circle, tikzit fill=yellow]
\tikzstyle{astate conj}=[fill=white, draw=black, shape=NWtriangle, tikzit category=asymmetric, tikzit shape=circle, tikzit fill=green]
\tikzstyle{astate adj}=[fill=white, draw=black, shape=SEtriangle, tikzit category=asymmetric, tikzit shape=circle, tikzit fill=red]
\tikzstyle{astate trans}=[fill=white, draw=black, shape=SWtriangle, tikzit category=asymmetric, tikzit shape=circle, tikzit fill=orange]
\tikzstyle{tri}=[fill={rgb,255: red,255; green,136; blue,136}, tikzit fill=red, draw=black, shape=regular polygon, regular polygon sides=3, regular polygon rotate=30, scale=0.7, inner sep=0.2pt, minimum size=7mm, tikzit shape=rectangle]
\tikzstyle{tri left}=[fill={rgb,255: red,255; green,136; blue,136}, tikzit fill=red, draw=black, shape=regular polygon, regular polygon sides=3, regular polygon rotate=210, scale=0.7, inner sep=0.2pt, minimum size=7mm, tikzit shape=rectangle]

\tikzstyle{hadamard edge}=[-, dashed, dash pattern=on 2pt off 0.5pt, thick, draw={rgb,255: red,68; green,136; blue,255}]
\tikzstyle{box edge}=[-, dashed, dash pattern=on 2pt off 0.5pt, thick, draw={rgb,255: red,203; green,192; blue,225}]
\tikzstyle{brace edge}=[-, tikzit draw=blue, decorate, decoration={brace,amplitude=1mm,raise=-1mm}]
\tikzstyle{diredge}=[->]
\tikzstyle{double edge}=[-, double, shorten <=-1mm, shorten >=-1mm, double distance=2pt]
\tikzstyle{gray edge}=[-, {gray!60!white}]
\tikzstyle{pointer edge}=[->, very thick, gray]
\tikzstyle{boldedge}=[-, line width=1.6pt, shorten <=-0.17mm, shorten >=-0.17mm]
\tikzstyle{bidir edge}=[<->, very thick, draw={rgb,255: red,191; green,191; blue,191}]

\input{quantum.tikzdefs}
\newtheorem{definition}{Definition}[section]
\newtheorem{proposition}{Proposition}[section]
\newtheorem{lemma}{Lemma}[section]

\newtheorem{corollary}{Corollary}[section]

\renewcommand{\ket}[1]{\left|#1\right>}
\renewcommand{\bra}[1]{\left<#1\right|}

\algrenewcommand\algorithmicrequire{\textbf{Input:}}
\algrenewcommand\algorithmicensure{\textbf{Output:}}

\title{Efficient Classical Simulation of Low-Rank-Width Quantum Circuits Using ZX-Calculus}
\author{Fedor Kuyanov and Aleks Kissinger}
\def\authorrunning{Kuyanov and Kissinger}
\def\titlerunning{Efficient Classical Simulation of Low-Rank-Width Quantum Circuits Using ZX-Calculus}

\begin{document}

\maketitle

\begin{abstract}
    In this paper, we introduce a technique for contracting (i.e. numerically evaluating) ZX-diagrams whose complexity scales with their rank-width, a graph parameter that behaves nicely under ZX rewrite rules. Given a rank-decomposition of width $R$, our method simulates a graph-like ZX-diagram in $\tilde O(4^R)$ time. Applied to classical simulation of quantum circuits, it is no slower than either naive state vector simulation or stabiliser decompositions with $\alpha = 0.5$, and in practice can be significantly faster for suitably chosen rank-decompositions. Since finding optimal rank-decompositions is NP-hard, we introduce heuristics that produce good decompositions in practice. We benchmark our simulation routine against Quimb, a popular tensor contraction library, and observe substantial reductions in floating-point operations (often by several orders of magnitude) for random and structured non-Clifford circuits as well as random ZX-diagrams.
\end{abstract}

\section{Introduction}
\label{intro-section}

Classical simulation of quantum circuits has a variety of applications in designing, verifying, and benchmarking quantum computations. Furthermore, many techniques are transferable to related structures such as tensor networks, which can be applied to solve interesting problems in their own right, e.g. in constraint satisfaction or many-body physics. While it is widely believed that classical simulation of quantum circuits is computationally hard in general, improvements in classical simulation algorithms can significantly affect the feasibility of solving specific problems. For example, a good choice of classical simulation algorithm can make the difference in a projected simulation time of 10,000 years~\cite{arute2019quantum} and 5 days~\cite{pan2021simulating}.

While certain families of quantum circuits, such as Clifford circuits, are known to be classically simulable in polynomial time, all known algorithms for simulating computationally universal quantum circuits scale exponentially with respect to some metric of the circuit. However, different techniques rely on different metrics. For example, na\"ive state vector techniques require space that grows exponentially with the number of qubits, matrix product state simulations grow with the amount of entanglement introduced by the circuit~\cite{vidal2003efficient} (which typically grows with the circuit depth), sum-over-Clifford methods (aka stabiliser decompositions) grow exponentially in the number of non-Clifford gates~\cite{bravyi2016trading,kissinger2022simulating,kissinger2022classical}, and sum-over-path methods grow exponentially in the number of non-diagonal gates~\cite{amy2019towards}.

Tensor network techniques, which generalise many of these methods, represent a quantum circuit (or a more general multilinear map) as a graph of smaller collections of complex numbers called \textit{tensors} which can be composed, or \textit{contracted}, by summing over shared indices. Markov and Shi showed that tensor contraction scales exponentially in the treewidth of the tensor network graph~\cite{markov2008simulating}, hence tensor networks that are relatively close to trees can be contracted efficiently. More general tensor networks are incredibly sensitive to the order in which smaller tensors are contracted together to form larger ones, and finding optimal contraction orders is known to be NP-hard. Various state-of-the-art tensor contraction techniques employ sophisticated heuristics to find graph decompositions that minimise treewidth or related metrics, like the edge/vertex congestion~\cite{gray2021hyper}.

The technique we introduce in this paper is similar in spirit to tensor network contraction methods, in that we start with a special kind of tensor network, namely a \textit{ZX-diagram}, and show that computing the scalar value of a closed ZX-diagram scales with respect to a particular metric of the associated graph, namely its \textit{rank-width}~\cite{oum2003approximation}. Unlike treewidth, rank-width can be small even for highly connected graphs: for example, the fully connected graph on $n$ nodes has treewidth $n - 1$ but rank-width 1.

The key observation is that, unlike for arbitrary tensor networks, the amount of entanglement across any bipartition of nodes in a ZX-diagram scales with the cut-rank of that bipartition. This property was first noted for graph states~\cite{hein2006entanglement}, which are closely related to ZX-diagrams (see e.g.~\cite{duncan2010rewriting}), and can be made explicit by simplifying ZX-diagrams using a set of graphical rewrite rules called the \textit{ZX-calculus}~\cite{coecke2008interacting,kissinger2024pqs}. Most of the ZX-calculus rules simplify ZX-diagrams without increasing the rank-width, thus one can exploit low-rank connectivity to reduce the number of edges between graph components. Using this idea, we provide a tensor contraction routine that simulates a graph-like ZX-diagram in $\tilde O(4^R)$ time, given its rank-decomposition of width $R$. We furthermore show that for circuits, our routine's complexity is bounded by $\tilde O(2^n)$ where $n$ is the number of qubits, and by $\tilde O(2^{\alpha t})$ where $t$ is the number of non-Clifford phase gates and $\alpha = 0.5$. Hence, our method is no slower than either na\"ive state vector simulation or stabiliser decompositions with $\alpha = 0.5$. Our tensor contraction routine is implemented in PyZX.\footnote{\url{https://github.com/zxcalc/pyzx/blob/5f5e409/pyzx/rank\_width.py\#L567}}

As in the case of treewidth, finding decompositions of minimal rank-width is NP-hard. So, we first introduce our methods for finding good rank-decompositions -- see Section~\ref{rw-heur-section}. We then introduce the ZX-diagram contraction algorithm in Section~\ref{sim-section}. Finally, in Section~\ref{bench-section}, we benchmark our methods against the na\"ive tensor contraction routine provided by the ZX-calculus library PyZX~\cite{kissinger2020pyzx}, as well as Quimb~\cite{gray2021hyper}, a state-of-the-art tensor contraction tool. We show substantial reductions in floating point operations for random CNOT+H+T circuits, two sets of structured non-Clifford circuits, and random ZX-diagrams.

During the preparation of this paper, we became aware of independent work by Codsi and Laakkonen making use of treewidth, rank-width, and stabiliser decompositions to do classical simulation of ZX-diagrams~\cite{codsi2026unifying}. Drawing explicit comparisons between the two approaches and/or combining them is a topic of future work.

\section{Preliminaries}
\label{prelim-section}

\subsection{ZX-calculus}
\label{zxcalc-subsection}

ZX-calculus~\cite{kissinger2024pqs} is a graphical formalism for representing quantum circuits as a specific type of tensor networks, called \emph{ZX-diagrams}. A ZX-diagram consists of building blocks called \emph{spiders}, and there are only two types of them: \emph{Z-} and \emph{X-} spiders.

\begin{definition}[Z- and X-spiders]
    \begin{align*}
        \textrm{Z-spider:} \qquad n \left\{\begin{tikzpicture}
	\begin{pgfonlayer}{nodelayer}
		\node [style=Z phase dot] (0) at (0, 0) {$\alpha$};
		\node [style=none] (1) at (-1, 1) {};
		\node [style=none] (2) at (-1, 0.5) {};
		\node [style=none] (3) at (-1, -1) {};
		\node [style=none] (4) at (1, 1) {};
		\node [style=none] (5) at (1, 0.5) {};
		\node [style=none] (6) at (1, -1) {};
		\node [style=none] (7) at (-0.75, 0) {$\vdots$};
		\node [style=none] (8) at (0.75, 0) {$\vdots$};
	\end{pgfonlayer}
	\begin{pgfonlayer}{edgelayer}
		\draw [bend right=15, looseness=0.75] (0) to (2.center);
		\draw [bend right] (0) to (1.center);
		\draw [bend left] (0) to (3.center);
		\draw [bend left] (0) to (4.center);
		\draw [bend left=15] (0) to (5.center);
		\draw [bend right] (0) to (6.center);
	\end{pgfonlayer}
\end{tikzpicture}\right\} m \, &\wdef \underbrace{\ket{0 \dots 0}}_{m} \underbrace{\bra{0 \dots 0}}_{n} +\,e^{i\alpha} \underbrace{\ket{1 \dots 1}}_{m} \underbrace{\bra{1 \dots 1}}_{n}, \\
        \textrm{X-spider:} \qquad n \left\{\begin{tikzpicture}
	\begin{pgfonlayer}{nodelayer}
		\node [style=X phase dot] (0) at (0, 0) {$\alpha$};
		\node [style=none] (1) at (-1, 1) {};
		\node [style=none] (2) at (-1, 0.5) {};
		\node [style=none] (3) at (-1, -1) {};
		\node [style=none] (4) at (1, 1) {};
		\node [style=none] (5) at (1, 0.5) {};
		\node [style=none] (6) at (1, -1) {};
		\node [style=none] (7) at (0.75, 0) {$\vdots$};
		\node [style=none] (8) at (-0.75, 0) {$\vdots$};
	\end{pgfonlayer}
	\begin{pgfonlayer}{edgelayer}
		\draw [bend left] (1.center) to (0);
		\draw [bend left=15] (2.center) to (0);
		\draw [bend right] (3.center) to (0);
		\draw [bend left] (0) to (4.center);
		\draw [bend left=15] (0) to (5.center);
		\draw [bend right] (0) to (6.center);
	\end{pgfonlayer}
\end{tikzpicture}\right\} m \, &\wdef \underbrace{\ket{+ \dots +}}_{m} \underbrace{\bra{+ \dots +}}_{n} +\,e^{i\alpha} \underbrace{\ket{- \dots -}}_{m} \underbrace{\bra{- \dots -}}_{n}.
    \end{align*}
\end{definition}

The phase $\alpha$ is omitted when $\alpha = 0$. The orthonormal bases $\{ \ket{0}, \ket{1} \}$ and $\{ \ket{{+}}, \ket{{-}} \}$ are the eigenbases of the Pauli $Z$ and $X$ matrices, respectively.

Hadamard gates interchange the Z- and X-bases and often appear in ZX-diagrams. Hence, it is convenient to introduce some special notation for them, either as a small yellow box or a dashed edge called the \textit{Hadamard edge}:
\begin{align}
    \label{eq:had-gate}
    \begin{tikzpicture}
	\begin{pgfonlayer}{nodelayer}
		\node [style=none] (0) at (-1, 0) {};
		\node [style=none] (1) at (1, 0) {};
	\end{pgfonlayer}
	\begin{pgfonlayer}{edgelayer}
		\draw [style=hadamard edge] (0.center) to (1.center);
	\end{pgfonlayer}
\end{tikzpicture}
 \weq \begin{tikzpicture}
	\begin{pgfonlayer}{nodelayer}
		\node [style=hadamard] (0) at (0, 0) {};
		\node [style=none] (1) at (-0.75, 0) {};
		\node [style=none] (2) at (0.75, 0) {};
	\end{pgfonlayer}
	\begin{pgfonlayer}{edgelayer}
		\draw (1.center) to (2.center);
	\end{pgfonlayer}
\end{tikzpicture}
 \wdef \begin{tikzpicture}
	\begin{pgfonlayer}{nodelayer}
		\node [style=gate] (0) at (0, 0) {$H$};
		\node [style=none] (1) at (-1, 0) {};
		\node [style=none] (2) at (1, 0) {};
	\end{pgfonlayer}
	\begin{pgfonlayer}{edgelayer}
		\draw (1.center) to (2.center);
	\end{pgfonlayer}
\end{tikzpicture}
 \weq e^{-\pi i / 4} \, \begin{tikzpicture}
	\begin{pgfonlayer}{nodelayer}
		\node [style=Z phase dot] (0) at (-1, 0) {$\frac{\pi}{2}$};
		\node [style=X phase dot] (1) at (0, 0) {$\frac{\pi}{2}$};
		\node [style=Z phase dot] (2) at (1, 0) {$\frac{\pi}{2}$};
		\node [style=none] (3) at (-1.75, 0) {};
		\node [style=none] (4) at (1.75, 0) {};
	\end{pgfonlayer}
	\begin{pgfonlayer}{edgelayer}
		\draw (3.center) to (4.center);
	\end{pgfonlayer}
\end{tikzpicture}
\end{align}

Every quantum circuit can be compiled into a ZX-diagram, that is, a tensor network consisting of Z/X spiders and regular/Hadamard edges. Note that measurement in the standard basis corresponds to a post-selection with a single-legged X-spider having a parametrised phase $a\pi$, representing the measurement outcome $a \in \{0, 1\}$. Then, by the Born rule, the probability of measuring $a$ equals the absolute value squared of the value of the ZX-diagram.

ZX-calculus offers rewrite rules~\cite[Figure~3.1]{kissinger2024pqs} which can be used for automated circuit simplification and simulation. For instance, they allow us to simulate Clifford circuits in polynomial time, providing an alternative proof for the Gottesman-Knill theorem~\cite{aaronson2004improved}. Notably, every ZX-diagram can be transformed to a \emph{graph-like form}~\cite[Definition~5.1.7]{kissinger2024pqs}, in which there are only green spiders and Hadamard edges. Using two additional rewrite rules derivable from the ZX-calculus ruleset, called \emph{local complementation} and \emph{pivoting} \cite[Lemmas~5.2.9~and~5.2.11]{kissinger2024pqs}, one can further eliminate some Clifford spiders from the graph-like ZX-diagram and obtain a \emph{reduced ZX-diagram}~\cite[Section~7.6]{kissinger2024pqs}, which in the case of scalar ZX-diagrams consists only of internal non-Clifford spiders and \emph{phase gadgets} -- structures at the top of Figure~\ref{reduced-diagram-figure}. Note that the reduced ZX-diagram can become arbitrarily dense; thus, conventional treewidth-based simulators are inefficient in this case.

\begin{figure}[htbp]
    \centering
    \includegraphics[width=0.7\textwidth]{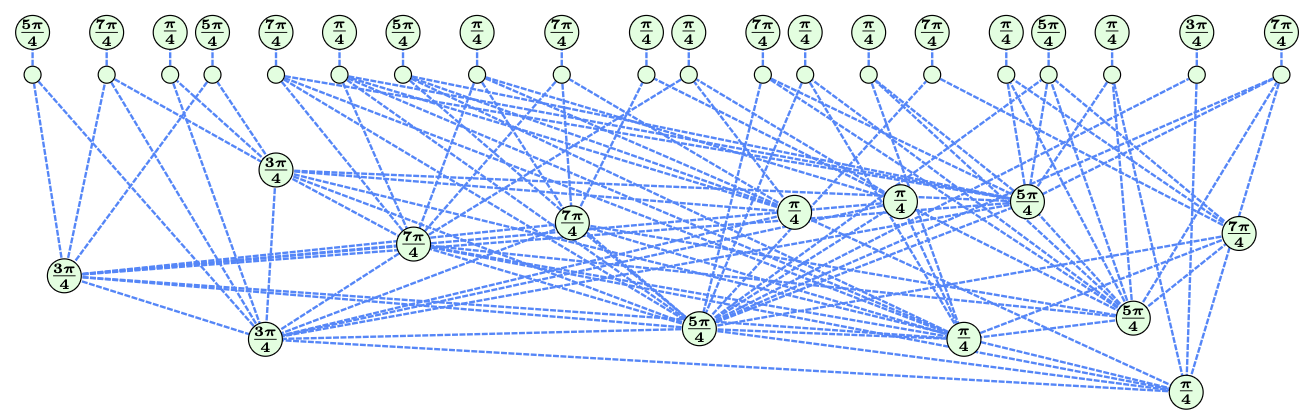}
    \caption{Example of a reduced Clifford+T ZX-diagram.}
    \label{reduced-diagram-figure}
\end{figure}

We also need the notion of parity maps~\cite[Chapter~4]{kissinger2024pqs}. For the two-element finite field $\mathbb F_2$, consider an $\mathbb F_2$-linear map $\phi \colon \mathbb{F}_2^n \to \mathbb{F}_2^m$, mapping the basis states $\ket{x_1 \dots x_n}$ to the basis states $\ket{y_1 \dots y_m}$, where $x_i, y_j \in \{0, 1\}$. We are interested in a graphical representation of $\phi$ in the ZX formalism. Let $A_\phi \in \mathrm{Mat}_{n \times m}(\mathbb{F}_2)$ be the matrix of $\phi$ acting on row vectors of $\mathbb{F}_2^n$. Consider a bipartite ZX-diagram on $n$ regular inputs and $m$ Hadamard outputs, with the $i$-th input connected to the $j$-th output iff $(A_\phi)_{ij} = 1$ (see Figure~\ref{parity-map-figure}). One can verify this diagram corresponds to $\phi$ by plugging in the basis states $\ket{x_1 \dots x_n}$, which are single-legged red spiders with phases $x_1 \pi, \dots, x_n \pi$, using the (sc) rule to copy them through, and using the (sf) and (cc) rules to obtain $\ket{y_1 \dots y_m}$ for $y := xA_\phi$. Suppose we now have two parity maps $\phi \colon \mathbb{F}_2^n \to \mathbb{F}_2^m$ and $\psi \colon \mathbb{F}_2^m \to \mathbb{F}_2^k$, given by the matrices $A_\phi$ and $A_\psi$, respectively. Recall that the parity matrix of the composition is the product of the parity matrices: $A_{\psi \circ \phi} = A_\phi A_\psi$. Graphically, this identity can be written as in Figure~\ref{parity-map-composition-figure}.

\begin{figure}[htbp]
    \centering
    \begin{minipage}{0.3\textwidth}
        \centering
        \begin{tikzpicture}
	\begin{pgfonlayer}{nodelayer}
		\node [style=Z dot] (0) at (-1, 2) {};
		\node [style=Z dot] (1) at (-1, 1) {};
		\node [style=Z dot] (2) at (-1, -2) {};
		\node [style=Z dot] (3) at (1, 1.5) {};
		\node [style=Z dot] (4) at (1, 0.5) {};
		\node [style=Z dot] (5) at (1, -1.5) {};
		\node [style=none] (6) at (2, 1.5) {};
		\node [style=none] (7) at (2, 0.5) {};
		\node [style=none] (8) at (2, -1.5) {};
		\node [style=none] (9) at (-2, 2) {};
		\node [style=none] (10) at (-2, 1) {};
		\node [style=none] (11) at (-2, -2) {};
		\node [style=none] (12) at (-1.5, -0.25) {$\vdots$};
		\node [style=none] (13) at (1.5, -0.25) {$\vdots$};
		\node [style=none] (14) at (0, 2.75) {$A_\phi$};
		\node [style=none] (15) at (-3, 3) {};
		\node [style=none] (16) at (3, 3) {};
		\node [style=none] (17) at (3, -3) {};
		\node [style=none] (18) at (-3, -3) {};
	\end{pgfonlayer}
	\begin{pgfonlayer}{edgelayer}
		\draw [style=hadamard edge] (3) to (6.center);
		\draw [style=hadamard edge] (4) to (7.center);
		\draw [style=hadamard edge] (5) to (8.center);
		\draw (9.center) to (0);
		\draw (10.center) to (1);
		\draw (11.center) to (2);
		\draw [style=hadamard edge] (0) to (3);
		\draw [style=hadamard edge] (1) to (3);
		\draw [style=hadamard edge] (1) to (4);
		\draw [style=hadamard edge] (2) to (5);
		\draw [style=hadamard edge] (1) to (5);
		\draw [style=hadamard edge] (2) to (4);
	\end{pgfonlayer}
\end{tikzpicture}
        \caption{a parity map}
        \label{parity-map-figure}
    \end{minipage}
    \begin{minipage}{0.6\textwidth}
        \centering
        \begin{tikzpicture}
	\begin{pgfonlayer}{nodelayer}
		\node [style=Z dot] (0) at (-8, 2) {};
		\node [style=Z dot] (1) at (-8, 1) {};
		\node [style=Z dot] (2) at (-8, -2) {};
		\node [style=Z dot] (3) at (-6, 1.5) {};
		\node [style=Z dot] (4) at (-6, 0.5) {};
		\node [style=Z dot] (5) at (-6, -1.5) {};
		\node [style=none] (9) at (-9, 2) {};
		\node [style=none] (10) at (-9, 1) {};
		\node [style=none] (11) at (-9, -2) {};
		\node [style=none] (12) at (-8.5, -0.25) {$\vdots$};
		\node [style=none] (13) at (-5, -0.25) {$\vdots$};
		\node [style=none] (14) at (-7, 2.75) {$A_\phi$};
		\node [style=Z dot] (21) at (-4, 1.5) {};
		\node [style=Z dot] (22) at (-4, 0.5) {};
		\node [style=Z dot] (23) at (-4, -1.5) {};
		\node [style=Z dot] (24) at (-2, 2) {};
		\node [style=Z dot] (25) at (-2, 1) {};
		\node [style=Z dot] (26) at (-2, -2) {};
		\node [style=none] (27) at (-1, 2) {};
		\node [style=none] (28) at (-1, 1) {};
		\node [style=none] (29) at (-1, -2) {};
		\node [style=none] (30) at (-1.5, -0.25) {$\vdots$};
		\node [style=none] (31) at (-3, 2.75) {$A_\psi$};
		\node [style=none] (32) at (0, 0) {$\propto$};
		\node [style=none] (33) at (1, 1) {};
		\node [style=none] (34) at (1, 2) {};
		\node [style=none] (35) at (1, -2) {};
		\node [style=Z dot] (36) at (2, 2) {};
		\node [style=Z dot] (37) at (2, 1) {};
		\node [style=Z dot] (38) at (2, -2) {};
		\node [style=Z dot] (39) at (4, 2) {};
		\node [style=Z dot] (40) at (4, 1) {};
		\node [style=Z dot] (41) at (4, -2) {};
		\node [style=none] (42) at (5, 2) {};
		\node [style=none] (43) at (5, 1) {};
		\node [style=none] (44) at (5, -2) {};
		\node [style=none] (45) at (1.5, -0.25) {$\vdots$};
		\node [style=none] (46) at (4.5, -0.25) {$\vdots$};
		\node [style=none] (47) at (3, 2.75) {$A_\phi A_\psi$};
		\node [style=none] (48) at (-10, 3) {};
		\node [style=none] (49) at (-10, -3) {};
		\node [style=none] (50) at (6, -3) {};
		\node [style=none] (51) at (6, 3) {};
	\end{pgfonlayer}
	\begin{pgfonlayer}{edgelayer}
		\draw (9.center) to (0);
		\draw (10.center) to (1);
		\draw (11.center) to (2);
		\draw [style=hadamard edge] (0) to (3);
		\draw [style=hadamard edge] (1) to (3);
		\draw [style=hadamard edge] (1) to (4);
		\draw [style=hadamard edge] (2) to (5);
		\draw [style=hadamard edge] (1) to (5);
		\draw [style=hadamard edge] (2) to (4);
		\draw [style=hadamard edge] (3) to (21);
		\draw [style=hadamard edge] (4) to (22);
		\draw [style=hadamard edge] (5) to (23);
		\draw [style=hadamard edge] (21) to (24);
		\draw [style=hadamard edge] (22) to (24);
		\draw [style=hadamard edge] (22) to (25);
		\draw [style=hadamard edge] (23) to (25);
		\draw [style=hadamard edge] (23) to (26);
		\draw [style=hadamard edge] (24) to (27.center);
		\draw [style=hadamard edge] (25) to (28.center);
		\draw [style=hadamard edge] (26) to (29.center);
		\draw (34.center) to (36);
		\draw (33.center) to (37);
		\draw (35.center) to (38);
		\draw [style=hadamard edge] (36) to (39);
		\draw [style=hadamard edge] (36) to (40);
		\draw [style=hadamard edge] (37) to (40);
		\draw [style=hadamard edge] (37) to (41);
		\draw [style=hadamard edge] (38) to (41);
		\draw [style=hadamard edge] (38) to (40);
		\draw [style=hadamard edge] (39) to (42.center);
		\draw [style=hadamard edge] (40) to (43.center);
		\draw [style=hadamard edge] (41) to (44.center);
	\end{pgfonlayer}
\end{tikzpicture}
        \caption{composition of parity maps}
        \label{parity-map-composition-figure}
    \end{minipage}
\end{figure}

\subsection{GFlow}
\label{gflow-subsection}

We use the notion of \emph{extended gflow} to bound the rank-width of certain ZX-diagrams. In general, extended gflow provides sufficient conditions under which a measurement pattern on a graph state, with single-qubit measurements in the three Pauli planes, can be executed deterministically. Although originally developed for the one-way model of \emph{measurement-based quantum computing} (MBQC)~\cite{browne2007generalized}, it has since found other applications, most notably in performing efficient circuit extraction from ZX-diagrams that have been simplified using the rules of ZX-calculus~\cite{backens2021there,simmons2021relating}. In particular, extended gflow always exists for reduced ZX-diagrams originating from quantum circuits, and there is a polynomial-time algorithm for finding it \cite[Theorems~3.11~and~4.21]{backens2021there}.

Extended gflow is defined on an \emph{open graph}, that is, a triple $(G, I, O)$ where $G = (V, E)$ is an undirected graph, and $I, O \subset V$ are respectively called input and output vertices. Extended gflow consists of a \emph{correction function} $g$ associating each non-output vertex $v$ with its correction set (in MBQC, it is used to handle the measurement outcomes of $v$), a strict partial order $\prec$ over $V$ denoting the vertex measurement order ($i \prec j$ if $i$ is measured before $j$), and a \emph{labelling function} $\lambda$ assigning to each non-output vertex one of the three Pauli planes (XY, YZ, or XZ).

\begin{definition}[Extended gflow]
    \label{ext-gflow-definition}
    Given an open graph $(G, I, O)$, an \emph{extended gflow} is a tuple $(g, \prec, \lambda)$, where $g \colon V(G) \setminus O \to \{0, 1\}^{V(G) \setminus I} \setminus \{\varnothing\}$, $\prec$ is a strict partial order over $V(G)$, $\lambda \colon V(G) \setminus O \to \{\mathrm{XY}, \mathrm{XZ}, \mathrm{YZ}\}$, which satisfy the following conditions:
    \begin{enumerate}
        \item if $j \in g(i)$ then $j = i$ or $i \prec j$,
        \item if $j \in \mathrm{Odd}(g(i))$ then $j = i$ or $i \prec j$,
        \item if $\lambda(i) = \mathrm{XY}$ then $i \not\in g(i)$ and $i \in \mathrm{Odd}(g(i))$,
        \item if $\lambda(i) = \mathrm{XZ}$ then $i \in g(i)$ and $i \in \mathrm{Odd}(g(i))$,
        \item if $\lambda(i) = \mathrm{YZ}$ then $i \in g(i)$ and $i \not\in \mathrm{Odd}(g(i))$,
    \end{enumerate}
    where $\mathrm{Odd}(S) := \{u \in V(G) \colon |N(u) \cap S| \equiv 1 \pmod 2\}$.
\end{definition}

\subsection{Rank-width}
\label{rw-subsection}

Rank-width~\cite{oum2003approximation} is a graph measure that captures the behaviour of its cut-ranks over $\mathbb{F}_2$. In particular, rank-width is the lowest possible width of the graph's \emph{rank-decomposition}, which is defined as a recursive partitioning process of its vertices (called \emph{branch-decomposition}) along with cut-ranks assigned to each edge of the decomposition tree. Rank-width is closely related to other structural graph parameters such as clique-width~\cite{courcelle1993handle} and treewidth~\cite{robertson1984graph}. Before defining rank-width, we must introduce the following notions of \emph{branch-decomposition} and \emph{cut-rank}.

\begin{definition}
    Let $M$ be a finite set, called the \emph{ground set} of the decomposition. A \emph{branch-decomposition} is a pair $(T, L)$, where $T$ is a tree with each non-leaf node having degree 3, and $L$ is the bijection from $M$ to the leaves of $T$.
\end{definition}

\begin{definition}
    Given an undirected graph $G = (V, E)$ and a subset $X \subset V$, the \emph{cut-rank} $\rho_G(X)$ is the rank of the adjacency matrix between $X$ and $V \setminus X$, viewed as a matrix over $\mathbb{F}_2$.
\end{definition}

Combining the two notions above, we get the definition of the \emph{rank-decomposition} and \emph{rank-width}. Refer to Figure~\ref{rank-decomp-figure} for illustration.

\begin{definition}
    Let $G = (V, E)$ be an undirected graph. A \emph{rank-decomposition} is a branch-decomposition with the ground set $V$, where each edge $(u, v)$ of the tree is assigned the cut-rank $\rho_{u, v} := \rho_G(X)$, where $X$ is the set of leaves in the same component as $u$ after deleting $(u, v)$. The \emph{width} of the rank-decomposition is the maximum $\rho_{u, v}$ over all edges $(u, v)$.
\end{definition}

\begin{definition}
    The \emph{rank-width} of an undirected graph $G$, written as $\mathrm{rw}(G)$, is the minimum width of its rank-decomposition.
\end{definition}

\begin{figure}[htbp]
    \centering
    \includegraphics[width=0.5\textwidth]{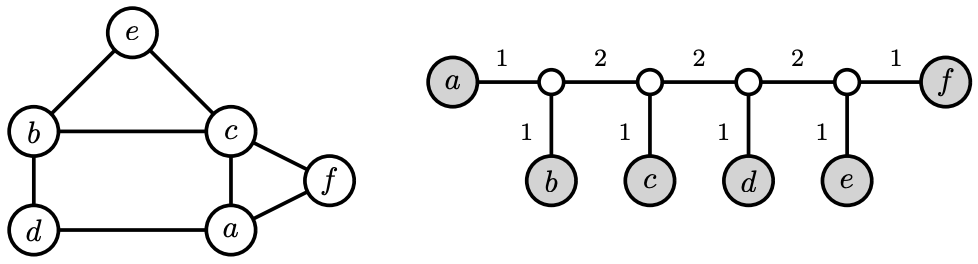}
    \caption{\cite[Figure~3.2]{nouwt2022simulated} Optimal rank-decomposition of $G$, showing $\mathrm{rw}(G) = 2$.}
    \label{rank-decomp-figure}
\end{figure}

In fact, the rank-decomposition presented in Figure~\ref{rank-decomp-figure} has a special property: its cut-ranks are taken over the partitions that exhibit a linear structure. This allows for the following simplified notion of \emph{linear rank-decomposition}.

\begin{definition}
    A rank-decomposition is called \emph{linear} if the non-leaf nodes of its branch-decomposition form a bamboo. Consequently, a linear rank-decomposition is given by a permutation $(p_i)_{i=1,\dots,|V|}$ of the graph vertices, with the cut-ranks $r_i$ taken between $\{p_1, \dots, p_i\}$ and $\{p_{i+1}, \dots, p_{|V|}\}$ for each $i \in [1; |V| - 1]$.
\end{definition}

Let us discuss some hardness results related to rank-width. In \cite{oum2017rw}, it is shown that computing the rank-width of a given graph is NP-hard, and the decision problem of determining whether the rank-width does not exceed $k$ is NP-complete. So, one can only hope to compute the rank-width approximately. In \cite{beyss2013fast}, there is an attempt to calculate upper and lower bounds for rank-width. The algorithm for the upper bound iteratively improves the decomposition using ``a mix of greedy and random decisions''. Another approach~\cite{kuyanov2025rw,nouwt2022simulated} produces the upper bound via simulated annealing. In this paper, we present our contribution to this problem in Section~\ref{rw-heur-section}. In particular, we construct a rank-decomposition from gflow with an explicit rank-width upper bound and introduce several heuristics that produce good rank-decompositions in practice.

\section{Constructing rank-decompositions}
\label{rw-heur-section}

In this section, we present our methods for generating rank-decompositions of an undirected graph $G$. Refer to Appendix~\ref{rw-heur-appendix} for more details regarding proofs, implementation, and complexity analysis. Our first method, ``rw-flow'', generates a linear rank-decomposition from the extended gflow with an explicit rank-width upper bound. This upper bound enables us to prove that our simulation algorithm, presented in Section~\ref{sim-section}, is not worse than the naive statevector simulation. We then describe our heuristics designed to perform well in practice, although there are no guarantees of their performance in the worst-case scenario. The first heuristic, ``rw-greedy-linear'', generates a linear rank-decomposition by greedily appending vertices to it. The second approach, ``rw-greedy-b2t'', on the other hand, produces highly nonlinear structures by greedily constructing the decomposition tree from bottom to top.

\subsection{rw-flow}
\label{rw-flow-subsection}

In the context of this method only, we assume that $G$ is an open graph with $|I| = |O| = n$. Our method starts by finding an extended gflow of $G$, that is, a tuple $(g, \prec, \lambda)$ where $\prec$ is the partial order and $\lambda(v) \in \{\mathrm{XY}, \mathrm{XZ}, \mathrm{YZ}\}$. Next, we compute a linear extension of $\prec$, i.e., a permutation of vertices $(p_i)_{i = 1, \dots, |V(G)|}$ such that for each $i < j$ we have $p_i \not\succ p_j$. Clearly, the linear extension of $\prec$ is computable in polynomial time, e.g. by topological sorting of the directed graph induced by $\prec$. Then, $p$ constitutes the desired linear rank-decomposition of width at most $n$ according to Lemma~\ref{gflow-rw-lemma} below. The proof of Lemma~\ref{gflow-rw-lemma} is given in Appendix~\ref{rw-flow-appendix}.

\begin{lemma}
    \label{gflow-rw-lemma}
    Let $(G, I, O)$ be an open graph, $(g, \prec, \lambda)$ be its extended gflow, and let $(p_i)_{i = 1, \dots, |V(G)|}$ be a permutation of vertices satisfying $p_i \not\succ p_j$ for each $i < j$. Then, $p$ constitutes a linear rank-decomposition of $G$, whose width does not exceed $|O|$.
\end{lemma}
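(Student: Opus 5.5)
The plan is to bound each cut-rank $r_i$ of the linear rank-decomposition $p$ directly. Fix $i$ and put $X := \{p_1, \dots, p_i\}$ and $Y := V(G) \setminus X$. Let $N$ be the $Y \times X$ biadjacency matrix over $\mathbb{F}_2$; its rank is $\rho_G(X) = r_i$. I will exhibit at least $|Y| - |O|$ linearly independent vectors in the left kernel of $N$ (vectors in $\mathbb{F}_2^Y$ annihilating $N$). Rank--nullity then gives $r_i \le |Y| - (|Y| - |O|) = |O|$.

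\textbf{Step 1: closure under successors.} If $w \in Y$ and $w \prec j$, then $j \in Y$. Indeed, the hypothesis $p_a \not\succ p_b$ for $a<b$ forces every successor of $w$ to appear later than $w$ in $p$, hence after position $i$. Combining this with conditions 1 and 2 of Definition~\ref{ext-gflow-definition}, for every $w \in Y \setminus O$ we get
\[
g(w) \subseteq \{w\} \cup \{j : w \prec j\} \subseteq Y, \qquad \mathrm{Odd}(g(w)) \subseteq \{w\} \cup \{j : w \prec j\} \subseteq Y .
\]
The second inclusion says $\mathrm{Odd}(g(w)) \cap X = \varnothing$. Equivalently, every $x \in X$ has an even number of neighbours in $g(w)$, so the indicator vector $\mathbf{1}_{g(w)} \in \mathbb{F}_2^Y$ lies in the left kernel of $N$.

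\textbf{Step 2: linear independence.} This is the main point to check, since for XY-labelled vertices $w \notin g(w)$, so a naive triangularity argument on the sets $g(w)$ fails. Suppose $\sum_{w \in S} \mathbf{1}_{g(w)} = 0$ for some nonempty $S \subseteq Y \setminus O$. Let $w_0$ be the element of $S$ that comes first in $p$.

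For any other $w' \in S$, $w_0$ is neither equal to $w'$ nor a successor of $w'$ (a successor of $w'$ would appear after $w'$). So by Step 1, $w_0 \notin g(w')$ and $w_0 \notin \mathrm{Odd}(g(w'))$. Now split on the label of $w_0$:
\begin{itemize}
\item If $\lambda(w_0) \in \{\mathrm{XZ}, \mathrm{YZ}\}$, then $w_0 \in g(w_0)$. Hence $w_0$ lies in exactly one of the summed sets, contradicting that the sum is zero.
\item If $\lambda(w_0) = \mathrm{XY}$, use that $\mathrm{Odd}$ is $\mathbb{F}_2$-linear in the indicator vector. Then $\mathrm{Odd}$ of the symmetric difference of the $g(w)$, which is $\mathrm{Odd}(\varnothing) = \varnothing$, must equal the symmetric difference of the sets $\mathrm{Odd}(g(w))$. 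Since $w_0 \in \mathrm{Odd}(g(w_0))$ by condition 3 and $w_0$ lies in no other $\mathrm{Odd}(g(w'))$, this symmetric difference contains $w_0$, a contradiction.
\end{itemize}
Thus the vectors $\{\mathbf{1}_{g(w)} : w \in Y \setminus O\}$ are independent. The left kernel of $N$ therefore has dimension at least $|Y \setminus O| \ge |Y| - |O|$, so $r_i \le |O|$.

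Since $i$ was arbitrary, every cut-rank of $p$ is at most $|O|$. The fact that $p$ is a linear rank-decomposition at all is immediate, because it is a permutation of $V(G)$.
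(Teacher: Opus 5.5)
Your proof is correct, and it takes a genuinely different route from the paper's.

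The paper argues locally, by induction from right to left. It shows $r_{i-1} \le r_i$ when $p_i \notin O$, and $r_{i-1} \le r_i + 1$ otherwise, by row-reducing the cut matrix with the single correction set $g(p_i)$ of the vertex crossing the cut. In the XY case, summing the rows indexed by $g(p_i)$ produces a row supported only on the $p_i$ column, so the matrix block-diagonalises and loses one unit of rank when that column is dropped. In the XZ/YZ case, the new row $p_i$ is shown to lie in the span of the rows indexed by $g(p_i)\setminus\{p_i\}$. The result then telescopes from $r_{|V|}=0$.

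You instead fix one cut and use all the correction sets $g(w)$, $w \in Y\setminus O$, at once as left-kernel vectors of the cut matrix. Independence comes from a triangularity argument at the earliest element of $S$. This gives $r_i \le |Y\cap O|$ in one shot, which is the same quantitative bound the paper's telescoping yields.

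Your version makes it transparent that the bound is a property of each individual cut rather than of the ordering. Step 1 only uses that $Y$ is closed under $\prec$-successors. In Step 2 you could take $w_0$ to be any $\prec$-minimal element of $S$, so $\{\mathbf{1}_{g(w)} : w \in V\setminus O\}$ is in fact globally independent. Hence any down-set/up-set cut has cut-rank at most the number of outputs on the up-set side.

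The paper's version gives in exchange the stepwise monotonicity of the cut-rank sequence along $p$. It also only ever manipulates one correction set at a time, which corresponds directly to a concrete row-reduction procedure.
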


\subsection{rw-greedy-linear}
\label{rw-greedy-linear-subsection}

Our greedy method for generating linear rank-decompositions is inspired by the Goemans-Soto technique for minimising symmetric submodular functions \cite{goemans2013algorithms}, based on the earlier Queyranne's work \cite{queyranne1998minimizing}. These works postulate that a certain greedy algorithm correctly minimises a symmetric submodular function defined on subsets. Our method is the simplified version of their algorithm applied to the function $f \colon S \subset V \mapsto \rho_G(S) - \lambda |S|$ for some $\lambda \in \mathbb{R}$, being submodular due to \cite[Eq.~2]{oum2017rw}.

Given a simple undirected graph $G = (V, E)$, our heuristic generates a linear rank-decomposition $(p_i)_{i = 1, \dots, |V|}$ by iteratively appending vertices to it. We start from an empty $p$. For each $i = 1, \dots, |V|$ we pick $p_i$ using the following rule. If the biadjacency matrix $M$ between $S := \{p_1, \dots, p_{i-1}\}$ and $V \setminus S$ is zero, then we set $p_i$ to an arbitrary vertex from $V \setminus S$. Otherwise, we iterate over all pivot columns $u$ of $M$ and set $p_i := \mathrm{argmin}_u \,\rho_G(S \cup \{u\})$. Recall that a pivot column of $M$ is a column such that in the row-echelon form of $M$, it contains the leftmost non-zero entry in some row. Refer to Algorithm~\ref{naive-rw-greedy-linear-algorithm} for the pseudocode of this heuristic.

The choice of limiting our search to the pivot columns of $M$ was not arbitrary. As verified by our experiments on numerous examples, it not only reduces the complexity of the algorithm but also significantly improves the quality of the rank-decomposition. A rigorous theoretical explanation for this phenomenon remains open.

\subsection{rw-greedy-b2t}
\label{rw-greedy-b2t-subsection}

Another one of our heuristics is inspired by the ``Hyper-Greedy'' contraction tree optimiser \cite{gray2021hyper}. In contrast to previous methods, it produces generic rank-decompositions by constructing the decomposition tree from its leaves to the root. We maintain a forest of partial decompositions $T = (V_T, E_T)$ and a partition of graph vertices $\{S_i\}_{i \in \mathrm{roots}(T)}$ corresponding to the set of leaves of each tree in $T$. We start from a collection of subtrees having a single leaf: $V_T = V$, $E_T = \varnothing$, and $S_v = \{v\}$ for each $v \in V$. Our method iteratively merges two trees in $T$ by adding a new vertex to $T$ as their common ancestor. Similarly to the previous case, we consider a subset of \emph{valid} root pairs as candidates for this operation. Namely, a pair $(i, j)$ of roots in $T$ is considered valid if $S_i$ intersects the pivot columns of the biadjacency matrix between $S_j$ and $V \setminus S_j$. On each iteration, pick a valid pair $(i, j)$ minimising $\rho_G(S_i \cup S_j)$ (if there are no valid pairs, pick an arbitrary one). Merge $S_i$ and $S_j$ by adding to $T$ a new vertex $k$ with children $i, j$, setting $S_k := S_i \cup S_j$, and discarding $S_i$, $S_j$ (since $i, j$ are no longer roots). Repeat this procedure $|V| - 1$ times, so that only one tree remains, corresponding to the desired rank-decomposition tree. Refer to Algorithm~\ref{naive-rw-greedy-b2t-algorithm} for the pseudocode of this algorithm.

\section{Classical simulation}
\label{sim-section}

In this paper, we focus on computing a single measurement amplitude $\langle y |C| x \rangle$ for a quantum circuit, which is an important subroutine in most classical simulation algorithms. Solving this problem suffices to perform \textit{strong simulation} of quantum circuits, that is, computing arbitrary measurement probabilities and marginal probabilities (the latter by means of norm computations, a.k.a. ``doubling''). We can also use amplitude computations to perform weak simulation, i.e. sampling individual measurement outcomes, via techniques such as gate-by-gate circuit sampling~\cite{bravyi2022simulate}.

Generally, computing measurement amplitudes in polynomial time is unfeasible, as it is just as hard as contracting an arbitrary tensor network, which is known to be \#P-complete~\cite{movassagh2023hardness}. In this section, we present an algorithm for contracting ZX-diagrams with complexity $\tilde O(4^R)$, where $R$ is the width of a rank-decomposition of the reduced ZX-diagram. Additionally, if $R$ is the width of a linear rank-decomposition, then our algorithm works in $\tilde O(2^R)$ time (Corollary~\ref{linear-decomp-corollary}). In the case of quantum circuits, our method's complexity is bounded by $\tilde O(2^n)$ for a certain rank-decomposition (Corollary~\ref{naive-sim-corollary}). Furthermore, for circuits consisting of $t$ non-Clifford phase gates, we can construct a rank-decomposition that yields the simulation complexity of $\tilde O(2^{t / 2})$ (Corollary~\ref{t-count-corollary}). Our method is summarised by Algorithm~\ref{sim-algorithm}.

\begin{algorithm}
    \caption{Rank-width--based classical simulation of a quantum circuit $C$}
    \begin{enumerate}
        \item Convert $C$ into a ZX-diagram and simplify it as discussed in Section~\ref{zxcalc-subsection}, to obtain a reduced ZX-diagram $D$.
        \item Construct a rank-decomposition $T$ of $D$.
        \item Run our tensor contraction algorithm from Proposition~\ref{sim-proposition} over $T$.
    \end{enumerate}
    \label{sim-algorithm}
\end{algorithm}

The first step is quite standard for ZX-based simulation methods (e.g. stabiliser decompositions~\cite{kissinger2022simulating,kissinger2022classical,philipps2025gnns,sutcliffe2025novel}). We assume the circuit's inputs and outputs are plugged in so that $D$ is closed, except when we run the ``rw-flow'' routine in step~2, which works only for open graphs (as the extended gflow does not exist for closed ZX-diagrams). In practice, we may use all the techniques from Section~\ref{rw-heur-section} and choose the best rank-decomposition with respect to $\mathrm{flops}(T)$, defined below, which is a measure similar to rank-width characterising the cost of our tensor contraction algorithm.

\subsection{Tensor contraction algorithm}
\label{main-algo-subsection}

\begin{proposition}
    \label{sim-proposition}
    Let $D = (G, \alpha)$ be a graph-like ZX-diagram with no inputs and outputs, where $G = (V, E)$ is an undirected graph and $\alpha_v \in [0; 2\pi)$ denotes the phase of the vertex $v$. Let $T = (V_T, E_T)$ be a rank-decomposition of $G$ with cut-ranks $\rho_{u, v}$. Then, there exists an algorithm for contracting $D$ with complexity $\tilde O(\mathrm{flops}(T))$, where
    \begin{align*}
        \mathrm{flops}(T) &:= \sum_{v \in V_T} 2^{w_v}, & w_v &:=
        \begin{cases}
            \min(\rho_{v, u_1} + \rho_{v, u_2}, \rho_{v, u_1} + \rho_{v, u_3}, \rho_{v, u_2} + \rho_{v, u_3}) & \text{for $N_T(v) = \{u_1, u_2, u_3\}$}, \\
            0 & \text{for $|N_T(v)| = 1$}.
        \end{cases}
    \end{align*}
\end{proposition}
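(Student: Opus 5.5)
The approach is dynamic programming over the tree $T$. Root $T$ at a leaf (or at any edge). For each tree edge $(v,u)$ with $v$ the child, let $X_v \subseteq V$ be the set of leaves below $v$. We compute a compressed tensor $\Psi_v$ representing the sub-diagram induced on $X_v$, with its dangling Hadamard edges to $V\setminus X_v$.

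The compression comes from parity maps. Let $A$ be the $X_v\times (V\setminus X_v)$ biadjacency matrix, of rank $r=\rho_{v,u}$, and factor it over $\mathbb F_2$ as $A = BC$, with $B$ of size $|X_v|\times r$ and $C$ of size $r\times |V\setminus X_v|$. By the composition of parity maps (Figure~\ref{parity-map-composition-figure}), the bundle of Hadamard edges crossing the cut can be replaced, up to a scalar, by a parity map $B$ into $r$ intermediate wires followed by the parity map $C$. The crossing therefore has only $r$ wires, and we can represent the sub-diagram on $X_v$, together with the $B$ part, as a vector $\Psi_v\in\mathbb C^{2^r}$.

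Concretely, we index the vector by $z\in\mathbb F_2^r$:
\[
\Psi_v(z) = \sum_{x\in\mathbb F_2^{X_v}:\ \text{constraint}} e^{i\alpha\cdot x}(-1)^{x^T A[X_v] x}.
\]
Here the Z-spiders are summed over their computational-basis values, the internal Hadamard edges contribute the quadratic phase, and the crossing edges contribute $(-1)^{x^T A y}=(-1)^{(x^TB)(Cy)}$. It is therefore cleanest to define
\[
\Psi_v(z)=\sum_{x} e^{i\alpha\cdot x}(-1)^{q_{X_v}(x)}\,(-1)^{z\cdot (x^TB)},
\]
which is a function of the $r$-bit "message" $z=Cy$ sent from outside. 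Equivalently, $\Psi_v$ is indexed by $s=x^TB\in\mathbb F_2^r$, and the two forms are related by a Hadamard transform, costing $\tilde O(2^r)$. At a leaf, $X_v=\{x\}$ and $\Psi_v$ has at most 2 entries, so the cost is $O(1)$, matching $w=0$.

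At an internal node $v$ with children $a,b$ and parent $u$, we have $X_v=X_a\sqcup X_b$. The cross terms between $X_a$ and $X_b$ contribute $(-1)^{x_a^T A_{ab} x_b}$, where $A_{ab}$ is a submatrix of $A_a$. So $A_{ab}=B_a C_a'$ factors through $s_a=x_a^TB_a$, and by the rank argument also through $s_b=x_b^TB_b$: $A_{ab}=B_a M B_b^T$ for some $r_a\times r_b$ matrix $M$, solvable by linear algebra. Likewise, the new outgoing signature $s_v=x^TB_v$ is a linear function $s_v = s_aP_a + s_bP_b$, since the rows of $A_v$ restricted to $X_a$ lie in the row space determined by $B_a$. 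Hence
\[
\Phi_v(s_v)=\sum_{s_a,s_b:\ s_aP_a+s_bP_b=s_v}\Phi_a(s_a)\Phi_b(s_b)(-1)^{s_aMs_b}.
\]
Evaluating this naively costs $2^{r_a+r_b}$. Since the tree is unrooted, we may choose which pair of edges to enumerate: given any two of $s_a,s_b,s_v$, the sum can be reorganised, after a Hadamard transform, as enumeration over that pair while the third is determined linearly. This gives cost $2^{w_v}$ up to polynomial factors, and summing over nodes yields $\tilde O(\mathrm{flops}(T))$. At the root, the final scalar is $\Phi$ evaluated with the trivial message, times the accumulated scalar factors.

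The main obstacle is the last step: showing that the contraction at a degree-3 node can always be done in $2^{\min(\text{pair sum})}$ rather than $2^{r_a+r_b}$. I would handle it by treating the three incident cuts symmetrically. The node tensor is a phase-polynomial "Clifford" kernel on $\mathbb F_2^{r_1}\times\mathbb F_2^{r_2}\times\mathbb F_2^{r_3}$, supported on an affine subspace with quadratic phase. Given the two larger-rank-free indices, I would show that the third index is either forced affinely or summed trivially. This uses the fact that $\rho(X_a\cup X_b)\le$ the sum of the ranks, together with the linear dependencies among the factorizations.

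I also need to check that the necessary rank factorizations and the matrices $M,P$ can be computed in polynomial time by Gaussian elimination; this contributes only the $\tilde O$ factors.
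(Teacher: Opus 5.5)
Your proposal is correct and is essentially the paper's proof in algebraic rather than graphical form. The paper also roots $T$ and stores at each node a $2^{r_u}$-dimensional compressed state obtained from a rank factorisation of the cut matrix ($A=UVA$ with $VU=I$, so your $\Phi$ and $B$ play the roles of its $\Psi_u$ and $U$), and your naive convolution is its first, $\tilde O(2^{r_v+r_w})$, method. The step you flag as the main obstacle is the paper's second and third methods, and your sketch goes through: $\Psi_v(z)=\sum_{s_a}\Phi_a(s_a)(-1)^{z\cdot s_aP_a}\,\Psi_b(s_aM+zP_b^T)$ costs $\tilde O(2^{r_a+r_v})$, since the Hadamard transforms on $r_b$ and $r_v$ bits are dominated because $r_b\le r_a+r_v$. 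This is exactly the paper's ``apply the parity map $(E_{vw}^T,E_{wu})$ to $\Psi_w$, multiply by the $E_{vu}$ phase, post-select with $\Psi_v$''.
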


\begin{proof}
    First, we \emph{root} $T$ by deleting an arbitrary edge $(u_0, v_0) \in E_T$, orienting all remaining edges towards $u_0$ and $v_0$, adding a dummy vertex $0$ as a parent of $u_0$ and $v_0$, and setting the cut-ranks $\rho_{u_0, 0} = \rho_{v_0, 0} := \rho_{u_0, v_0}$. For each vertex $u$ in the rooted decomposition tree, consider the set of leaves $S_u$ from its subtree. Denote $r_u := \rho_G(S_u)$. Note that for each $u \neq 0$ we have $r_u = \rho_{u, p}$, where $p$ is the parent of $u$, and also $r_0 = 0$.

    Our contraction algorithm is a recursive procedure which takes a vertex $u$ of $T$ and simulates the subdiagram $D|_{S_u}$. Namely, it returns the state $\Psi_u$ and the parity matrix $M_u$ satisfying the equation in Figure~\ref{sim-conventions-figure}. To simulate the whole diagram $D$, we call this routine for $u = 0$ and return $\Psi_0$.

    \begin{figure}
        \centering
        \begin{minipage}{0.55\textwidth}
            \centering
            \begin{tikzpicture}
	\begin{pgfonlayer}{nodelayer}
		\node [style=Z phase dot] (0) at (-2, 1.5) {$\alpha_1$};
		\node [style=Z phase dot] (1) at (-2, -1.5) {$\alpha_{|S_u|}$};
		\node [style=none] (2) at (-2, -0.25) {$\vdots$};
		\node [style=Z phase dot] (3) at (-2, 0.5) {$\alpha_2$};
		\node [style=none] (6) at (-1, 1.75) {};
		\node [style=none] (7) at (-1, 1.25) {};
		\node [style=none] (8) at (-1, 0.75) {};
		\node [style=none] (9) at (-1, 0.25) {};
		\node [style=none] (10) at (-1, -1.25) {};
		\node [style=none] (11) at (-1, -1.75) {};
		\node [style=none] (12) at (-1, -0.75) {};
		\node [style=none] (13) at (-1, -0.25) {};
		\node [style=none] (14) at (-2.25, -0.75) {};
		\node [style=none] (15) at (-2.25, -0.25) {};
		\node [style=none] (18) at (0, 0) {$=$};
		\node [style=none] (20) at (1.5, 0) {$r_u \Biggl\{$};
		\node [style=ket] (21) at (3.5, 0) {$\, \Psi_u \,$};
		\node [style=none] (22) at (4, 1) {};
		\node [style=none] (23) at (4, -1) {};
		\node [style=Z dot] (24) at (5, 1) {};
		\node [style=Z dot] (4) at (5, -1) {};
		\node [style=none] (25) at (4.75, 0.25) {$\vdots$};
		\node [style=Z dot] (26) at (7, 1.5) {};
		\node [style=Z dot] (27) at (7, 0.5) {};
		\node [style=Z dot] (28) at (7, -1.5) {};
		\node [style=Z dot] (29) at (8, 1.5) {};
		\node [style=Z dot] (30) at (8, 0.5) {};
		\node [style=Z dot] (31) at (8, -1.5) {};
		\node [style=none] (32) at (7.5, -0.25) {$\vdots$};
		\node [style=none] (33) at (9, 1.75) {};
		\node [style=none] (34) at (9, 1.25) {};
		\node [style=none] (35) at (9, 0.75) {};
		\node [style=none] (36) at (9, 0.25) {};
		\node [style=none] (37) at (9, -1.25) {};
		\node [style=none] (38) at (9, -1.75) {};
		\node [style=none] (39) at (9, -0.75) {};
		\node [style=none] (40) at (9, -0.25) {};
		\node [style=none] (42) at (6, 2.25) {$M_u$};
		\node [style=none] (43) at (-7.5, -2.25) {};
		\node [style=none] (44) at (-7.5, 2.5) {};
		\node [style=none] (45) at (9.5, -2.25) {};
		\node [style=none] (46) at (9.5, 2.5) {};
		\node [style=none] (47) at (-3.5, 0) {$=$};
		\node [style=none] (51) at (-4.5, 1) {};
		\node [style=none] (52) at (-4.5, -1) {};
		\node [style=none] (55) at (-4.75, 0.25) {$\vdots$};
		\node [style=vertex set] (58) at (-6.25, 0) {$\quad S_u \quad$};
	\end{pgfonlayer}
	\begin{pgfonlayer}{edgelayer}
		\draw [style=hadamard edge, bend right=15] (0) to (3);
		\draw [style=hadamard edge, bend right=45] (0) to (1);
		\draw [style=hadamard edge] (0) to (6.center);
		\draw [style=hadamard edge] (0) to (7.center);
		\draw [style=hadamard edge] (3) to (8.center);
		\draw [style=hadamard edge] (3) to (9.center);
		\draw [style=hadamard edge] (1) to (10.center);
		\draw [style=hadamard edge] (1) to (11.center);
		\draw [style=hadamard edge] (1) to (12.center);
		\draw [style=hadamard edge] (3) to (13.center);
		\draw [style=hadamard edge, bend left=15] (1) to (14.center);
		\draw [style=hadamard edge, bend right=15] (3) to (15.center);
		\draw (22.center) to (24);
		\draw (23.center) to (4);
		\draw [style=hadamard edge] (24) to (26);
		\draw [style=hadamard edge] (24) to (27);
		\draw [style=hadamard edge] (4) to (27);
		\draw [style=hadamard edge] (4) to (28);
		\draw [style=hadamard edge] (26) to (29);
		\draw [style=hadamard edge] (27) to (30);
		\draw [style=hadamard edge] (28) to (31);
		\draw [style=hadamard edge] (24) to (28);
		\draw [style=hadamard edge] (29) to (33.center);
		\draw [style=hadamard edge] (29) to (34.center);
		\draw [style=hadamard edge] (30) to (35.center);
		\draw [style=hadamard edge] (30) to (36.center);
		\draw [style=hadamard edge] (30) to (40.center);
		\draw [style=hadamard edge] (31) to (38.center);
		\draw [style=hadamard edge] (31) to (37.center);
		\draw [style=hadamard edge] (31) to (39.center);
		\draw [style=hadamard edge, bend left=15] (58) to (51.center);
		\draw [style=hadamard edge, bend right=15] (58) to (52.center);
	\end{pgfonlayer}
\end{tikzpicture}
            \caption{Conventions for our simulation routine. $\Psi_u$ is a $r_u$-qubit state, and $M_u$ is a parity matrix of size $r_u \times |S_u|$. The rightmost edges (between $S_u$ and $V \setminus S_u$) are unchanged.}
            \label{sim-conventions-figure}
        \end{minipage}
        \hspace{0.05\textwidth}
        \begin{minipage}{0.35\textwidth}
            \centering
            \begin{tikzpicture}
	\begin{pgfonlayer}{nodelayer}
		\node [style=Z phase dot] (0) at (-2.5, 0) {$\alpha_u$};
		\node [style=none] (1) at (-1, 1.5) {};
		\node [style=none] (2) at (-1, 0.75) {};
		\node [style=none] (3) at (-1, -1.5) {};
		\node [style=none] (5) at (-1.25, -0.25) {$\vdots$};
		\node [style=none] (6) at (0, 0) {$=$};
		\node [style=Z phase dot] (7) at (1.5, 0) {$\alpha_u$};
		\node [style=Z dot] (8) at (2.5, 0) {};
		\node [style=Z dot] (9) at (3.5, 0) {};
		\node [style=Z dot] (10) at (4.5, 0) {};
		\node [style=none] (11) at (6, 1.5) {};
		\node [style=none] (12) at (6, 0.75) {};
		\node [style=none] (13) at (6, -1.5) {};
		\node [style=none] (14) at (5.75, -0.25) {$\vdots$};
		\node [style=none] (15) at (-3, -2) {};
		\node [style=none] (16) at (-3, 2) {};
		\node [style=none] (17) at (6.5, -2) {};
		\node [style=none] (18) at (6.5, 2) {};
		\node [style=none] (19) at (1.5, 1) {$\Psi_u$};
		\node [style=none] (20) at (3, 1) {$M_u$};
	\end{pgfonlayer}
	\begin{pgfonlayer}{edgelayer}
		\draw [style=hadamard edge, bend left] (0) to (1.center);
		\draw [style=hadamard edge, bend left=15] (0) to (2.center);
		\draw [style=hadamard edge, bend right] (0) to (3.center);
		\draw (7) to (8);
		\draw [style=hadamard edge] (8) to (9);
		\draw [style=hadamard edge] (9) to (10);
		\draw [style=hadamard edge, bend left] (10) to (11.center);
		\draw [style=hadamard edge, bend left=15] (10) to (12.center);
		\draw [style=hadamard edge, bend right] (10) to (13.center);
	\end{pgfonlayer}
\end{tikzpicture}
            \caption{Simulation of $D|_{S_u}$: leaf case}
            \label{sim-leaf-figure}
        \end{minipage}
    \end{figure}

    The recursive procedure works as follows. If $u$ is a leaf of $T$, then $\Psi_u$ is just a single-legged spider and $M_u$ is an identity matrix -- see Figure~\ref{sim-leaf-figure}. Otherwise, we follow the course of reasoning presented in Figure~\ref{sim-general-figure}:

    \begin{enumerate}[label = \textbf{Step \arabic{*}.}, align = right, leftmargin = 4em]
        \item Denote by $v, w$ the children of $u$ in $T$, and let $A, B$ be the biadjacency matrices defined by the pairs of vertex sets $(S_u, V \setminus S_u)$ and $(S_v, S_w)$, respectively. Note that $S_u = S_v \sqcup S_w$.
        \item Recursively simulate $D|_{S_v}, D|_{S_w}$ to obtain $\Psi_v, \Psi_w, M_v, M_w$.
        \item $B$ now corresponds to a layer of CZ gates after the parity map defined by $M_v \oplus M_w$. We can shift this to a layer of CZ gates before the parity map given by $E_{vw} := M_v B M_w^T$.
        \item Find $|S_u| \times r_u, r_u \times |S_u|$ matrices $U, V$ such that $A = UVA$. This is possible because $A$ has rank $r_u$, as guaranteed by the rank-decomposition. That is, we can write $A = PA$, where $P$ is the projector onto the image of $A$. We can then split the projector $P$ as a pair of matrices $U, V$ such that $P = UV$ and $VU = I_{r_u}$.
        \item Let $U_v$ be the first $|S_v|$ rows of $U$, and let $U_w$ be the last $|S_w|$ rows. Composing the parity maps again, compute $E_{vu} := M_v U_v$ and $E_{wu} := M_w U_w$.
        \item Finally, perform the \emph{convolution}, further detailed below, to contract the left half of the diagram and obtain $\Psi_u$. Setting $M_u := V$, we obtain $\Psi_u, M_u$ in the required form for the recursion at \textbf{Step 2}. If $u$ is the root, then $\Psi_u$ is just a scalar, so we return it as the result.
    \end{enumerate}

    \begin{figure}
        \centering
        \input{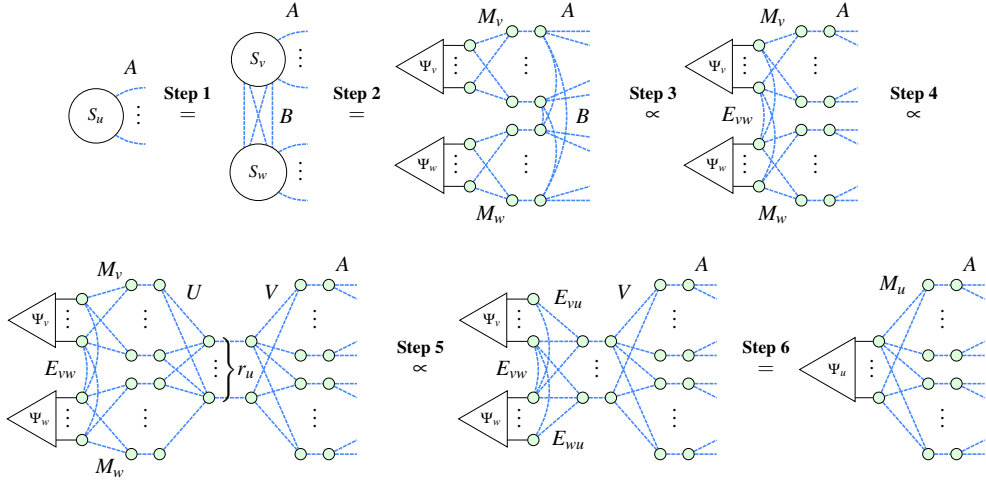}
        \caption{Simulation of $D|_{S_u}$: general case}
        \label{sim-general-figure}
    \end{figure}

    The convolution subroutine contracts the tripartite ZX-diagram as shown in Eq.~\eqref{conv-equation}. Recall that $\Psi_v, \Psi_w$ are $r_v$- and $r_w$-qubit states, respectively, and the output is a $r_u$-qubit state $\widehat \Psi_u$ (which relates to $\Psi_u$ via Hadamard transform). The parity matrices $E_{vu}, E_{wu}, E_{vw}$ denote the edges between the three parts. We show that $\widehat \Psi_u$ can be computed in $\tilde O\left(2^{\min(r_v + r_w, r_u + r_v, r_u + r_w)}\right)$ time using the cheapest of three equivalent contraction methods, described below.

    \begin{align}
        \label{conv-equation}
        \begin{tikzpicture}
	\begin{pgfonlayer}{nodelayer}
		\node [style=ket] (0) at (0, 1.75) {$\,\Psi_v$};
		\node [style=ket] (1) at (0, -1.75) {$\Psi_w$};
		\node [style=none] (2) at (0.5, 2.5) {};
		\node [style=none] (3) at (0.5, 1) {};
		\node [style=none] (4) at (0.5, -1) {};
		\node [style=none] (5) at (0.5, -2.5) {};
		\node [style=Z dot] (6) at (1.5, 2.5) {};
		\node [style=Z dot] (7) at (1.5, 1) {};
		\node [style=Z dot] (8) at (1.5, -1) {};
		\node [style=Z dot] (9) at (1.5, -2.5) {};
		\node [style=Z dot] (10) at (4, 1) {};
		\node [style=Z dot] (11) at (4, -1) {};
		\node [style=none] (12) at (4.5, 1) {};
		\node [style=none] (13) at (4.5, -1) {};
		\node [style=none] (14) at (1, 2) {$\vdots$};
		\node [style=none] (15) at (1, -1.5) {$\vdots$};
		\node [style=none] (16) at (4.25, 0.25) {$\vdots$};
		\node [style=none] (17) at (3, 2.5) {$E_{vu}$};
		\node [style=none] (18) at (3, -2.5) {$E_{wu}$};
		\node [style=none] (19) at (1, 0) {$E_{vw}$};
	\end{pgfonlayer}
	\begin{pgfonlayer}{edgelayer}
		\draw (2.center) to (6);
		\draw (3.center) to (7);
		\draw (4.center) to (8);
		\draw (5.center) to (9);
		\draw (10) to (12.center);
		\draw (11) to (13.center);
		\draw [style=hadamard edge] (6) to (10);
		\draw [style=hadamard edge] (6) to (11);
		\draw [style=hadamard edge] (7) to (11);
		\draw [style=hadamard edge] (8) to (10);
		\draw [style=hadamard edge] (8) to (11);
		\draw [style=hadamard edge] (9) to (11);
		\draw [style=hadamard edge, bend left] (6) to (8);
		\draw [style=hadamard edge, bend left] (7) to (9);
	\end{pgfonlayer}
\end{tikzpicture} \weq \begin{tikzpicture}
	\begin{pgfonlayer}{nodelayer}
		\node [style=ket] (0) at (0, 0) {$\,\,\widehat\Psi_u\,$};
		\node [style=none] (1) at (0.5, 1) {};
		\node [style=none] (2) at (0.5, -1) {};
		\node [style=none] (3) at (1.5, 1) {};
		\node [style=none] (4) at (1.5, -1) {};
		\node [style=none] (5) at (1.25, 0.25) {$\vdots$};
	\end{pgfonlayer}
	\begin{pgfonlayer}{edgelayer}
		\draw (1.center) to (3.center);
		\draw (2.center) to (4.center);
	\end{pgfonlayer}
\end{tikzpicture}
    \end{align}
    \vspace{5pt}

    The first method, summarised in Figure~\ref{conv-vw-figure}, is based on unfusing the $E_{vw}$-edges. We start by computing $\Psi_v \otimes \Psi_w$. Then, we multiply it by the phase factor corresponding to $E_{vw}$. Next, we apply the parity map $(E_{vu}, E_{wu})$. Note that all these steps take $\tilde O(2^{r_v + r_w})$ time, since $r_u \le r_v + r_w$ follows from the submodular inequalities for cut-ranks~\cite[Eq.~2]{oum2017rw}.

    The second method, summarised in Figure~\ref{conv-uv-figure}, first rotates the $\Psi_v$ state (so that it becomes a post-selection) and then unfuses the $E_{vu}$-edges. We start by applying the parity map $(E_{vw}^T, E_{wu})$ to $\Psi_w$. Then, we multiply the resulting state by the $E_{vu}$ phase factor. Finally, we perform the post-selection with $\Psi_v$. Note that all these steps take $\tilde O(2^{r_u + r_v})$ time, since $r_w \le r_u + r_v$ by \cite[Eq.~2]{oum2017rw}.

    The third method with complexity $\tilde O(2^{r_u + r_w})$ is just the second one with $\Psi_v$ and $\Psi_w$ interchanged.

    Finally, we analyse the overall complexity of the recursive contraction procedure. Note that steps 1, 3, 4, and 5 take polynomial time. The convolution subroutine works in $\tilde O\left(2^{\min(r_u + r_v, r_u + r_w, r_v + r_w)}\right)$ time. By induction over $u$, it can be shown that the overall complexity is indeed bounded by $\mathrm{flops}(T)$.

    \begin{figure}[htbp]
        \centering
        \begin{minipage}{0.49\textwidth}
            \centering
            \begin{tikzpicture}
	\begin{pgfonlayer}{nodelayer}
		\node [style=ket] (0) at (-5.5, 1.75) {$\,\Psi_v$};
		\node [style=ket] (1) at (-5.5, -1.75) {$\Psi_w$};
		\node [style=none] (2) at (-5, 2.5) {};
		\node [style=none] (3) at (-5, 1) {};
		\node [style=none] (4) at (-5, -1) {};
		\node [style=none] (5) at (-5, -2.5) {};
		\node [style=Z dot] (6) at (-4, 2.5) {};
		\node [style=Z dot] (7) at (-4, 1) {};
		\node [style=Z dot] (8) at (-4, -1) {};
		\node [style=Z dot] (9) at (-4, -2.5) {};
		\node [style=Z dot] (10) at (-1.5, 1) {};
		\node [style=Z dot] (11) at (-1.5, -1) {};
		\node [style=none] (12) at (-1, 1) {};
		\node [style=none] (13) at (-1, -1) {};
		\node [style=none] (14) at (-4.5, 2) {$\vdots$};
		\node [style=none] (15) at (-4.5, -1.5) {$\vdots$};
		\node [style=none] (16) at (-1.25, 0.25) {$\vdots$};
		\node [style=none] (17) at (-2.5, 2.5) {$E_{vu}$};
		\node [style=none] (18) at (-2.5, -2.5) {$E_{wu}$};
		\node [style=none] (19) at (-4.75, 0) {$E_{vw}$};
		\node [style=none] (20) at (0, 0) {$=$};
		\node [style=ket] (21) at (2, 1.75) {$\,\Psi_v$};
		\node [style=ket] (22) at (2, -1.75) {$\Psi_w$};
		\node [style=none] (23) at (2.5, 2.5) {};
		\node [style=none] (24) at (2.5, 1) {};
		\node [style=none] (25) at (2.5, -1) {};
		\node [style=none] (26) at (2.5, -2.5) {};
		\node [style=Z dot] (27) at (4.5, 2.5) {};
		\node [style=Z dot] (28) at (4.5, 1) {};
		\node [style=Z dot] (29) at (4.5, -1) {};
		\node [style=Z dot] (30) at (4.5, -2.5) {};
		\node [style=Z dot] (31) at (7, 1) {};
		\node [style=Z dot] (32) at (7, -1) {};
		\node [style=none] (33) at (7.5, 1) {};
		\node [style=none] (34) at (7.5, -1) {};
		\node [style=none] (35) at (3, 2) {$\vdots$};
		\node [style=none] (36) at (3, -1.5) {$\vdots$};
		\node [style=none] (37) at (7.25, 0.25) {$\vdots$};
		\node [style=Z dot] (38) at (3.5, 2.5) {};
		\node [style=Z dot] (39) at (3.5, 1) {};
		\node [style=Z dot] (40) at (3.5, -1) {};
		\node [style=Z dot] (41) at (3.5, -2.5) {};
		\node [style=none] (42) at (4, -1.5) {$\vdots$};
		\node [style=none] (43) at (4, 2) {$\vdots$};
		\node [style=none] (44) at (6, 2.5) {$E_{vu}$};
		\node [style=none] (45) at (6, -2.5) {$E_{wu}$};
		\node [style=none] (46) at (2.75, 0) {$E_{vw}$};
	\end{pgfonlayer}
	\begin{pgfonlayer}{edgelayer}
		\draw (2.center) to (6);
		\draw (3.center) to (7);
		\draw (4.center) to (8);
		\draw (5.center) to (9);
		\draw (10) to (12.center);
		\draw (11) to (13.center);
		\draw [style=hadamard edge] (6) to (10);
		\draw [style=hadamard edge] (6) to (11);
		\draw [style=hadamard edge] (7) to (11);
		\draw [style=hadamard edge] (8) to (10);
		\draw [style=hadamard edge] (8) to (11);
		\draw [style=hadamard edge] (9) to (11);
		\draw [style=hadamard edge, bend left] (6) to (8);
		\draw [style=hadamard edge, bend left] (7) to (9);
		\draw (23.center) to (27);
		\draw (24.center) to (28);
		\draw (25.center) to (29);
		\draw (26.center) to (30);
		\draw (31) to (33.center);
		\draw (32) to (34.center);
		\draw [style=hadamard edge] (27) to (31);
		\draw [style=hadamard edge] (27) to (32);
		\draw [style=hadamard edge] (28) to (32);
		\draw [style=hadamard edge] (29) to (31);
		\draw [style=hadamard edge] (29) to (32);
		\draw [style=hadamard edge] (30) to (32);
		\draw [style=hadamard edge, bend left=15] (38) to (40);
		\draw [style=hadamard edge, bend left=15] (39) to (41);
	\end{pgfonlayer}
\end{tikzpicture}
            \caption{Convolution with complexity $\tilde O(2^{r_v + r_w})$}
            \label{conv-vw-figure}
        \end{minipage}
        \begin{minipage}{0.49\textwidth}
            \centering
            \begin{tikzpicture}
	\begin{pgfonlayer}{nodelayer}
		\node [style=ket] (0) at (-5.5, 1.75) {$\,\Psi_v$};
		\node [style=ket] (1) at (-5.5, -1.75) {$\Psi_w$};
		\node [style=none] (2) at (-5, 2.5) {};
		\node [style=none] (3) at (-5, 1) {};
		\node [style=none] (4) at (-5, -1) {};
		\node [style=none] (5) at (-5, -2.5) {};
		\node [style=Z dot] (6) at (-4, 2.5) {};
		\node [style=Z dot] (7) at (-4, 1) {};
		\node [style=Z dot] (8) at (-4, -1) {};
		\node [style=Z dot] (9) at (-4, -2.5) {};
		\node [style=Z dot] (10) at (-1.5, 1) {};
		\node [style=Z dot] (11) at (-1.5, -1) {};
		\node [style=none] (12) at (-1, 1) {};
		\node [style=none] (13) at (-1, -1) {};
		\node [style=none] (14) at (-4.5, 2) {$\vdots$};
		\node [style=none] (15) at (-4.5, -1.5) {$\vdots$};
		\node [style=none] (16) at (-1.25, 0.25) {$\vdots$};
		\node [style=none] (17) at (-2.5, 2.5) {$E_{vu}$};
		\node [style=none] (18) at (-2.5, -2.5) {$E_{wu}$};
		\node [style=none] (19) at (-4.75, 0) {$E_{vw}$};
		\node [style=none] (20) at (0, 0) {$=$};
		\node [style=ket] (21) at (2.25, 0) {$\Psi_w$};
		\node [style=none] (22) at (8.75, 2.5) {};
		\node [style=none] (23) at (8.75, 1) {};
		\node [style=none] (24) at (2.75, 0.75) {};
		\node [style=none] (25) at (2.75, -0.75) {};
		\node [style=Z dot] (26) at (6.25, 2.5) {};
		\node [style=Z dot] (27) at (6.25, 1) {};
		\node [style=Z dot] (28) at (3.75, 0.75) {};
		\node [style=Z dot] (29) at (3.75, -0.75) {};
		\node [style=Z dot] (30) at (6.25, -1) {};
		\node [style=Z dot] (31) at (6.25, -2.5) {};
		\node [style=none] (32) at (8.75, -1) {};
		\node [style=none] (33) at (8.75, -2.5) {};
		\node [style=none] (34) at (6.75, 2) {$\vdots$};
		\node [style=none] (35) at (3.25, 0.25) {$\vdots$};
		\node [style=none] (36) at (6.75, -1.5) {$\vdots$};
		\node [style=none] (37) at (9, 0) {$E_{vu}$};
		\node [style=none] (38) at (4.75, -2.5) {$E_{wu}$};
		\node [style=none] (39) at (4.75, 2.5) {$E_{vw}^T$};
		\node [style=bra] (40) at (9.25, 1.75) {$\,\Psi_v$};
		\node [style=Z dot] (41) at (7.25, 2.5) {};
		\node [style=Z dot] (42) at (7.25, 1) {};
		\node [style=Z dot] (43) at (7.25, -1) {};
		\node [style=Z dot] (44) at (7.25, -2.5) {};
		\node [style=none] (45) at (8.25, 2) {$\vdots$};
		\node [style=none] (46) at (8.25, -1.5) {$\vdots$};
	\end{pgfonlayer}
	\begin{pgfonlayer}{edgelayer}
		\draw (2.center) to (6);
		\draw (3.center) to (7);
		\draw (4.center) to (8);
		\draw (5.center) to (9);
		\draw (10) to (12.center);
		\draw (11) to (13.center);
		\draw [style=hadamard edge] (6) to (10);
		\draw [style=hadamard edge] (6) to (11);
		\draw [style=hadamard edge] (7) to (11);
		\draw [style=hadamard edge] (8) to (10);
		\draw [style=hadamard edge] (8) to (11);
		\draw [style=hadamard edge] (9) to (11);
		\draw [style=hadamard edge, bend left] (6) to (8);
		\draw [style=hadamard edge, bend left] (7) to (9);
		\draw (22.center) to (26);
		\draw (23.center) to (27);
		\draw (24.center) to (28);
		\draw (25.center) to (29);
		\draw (30) to (32.center);
		\draw (31) to (33.center);
		\draw [style=hadamard edge] (28) to (30);
		\draw [style=hadamard edge] (28) to (31);
		\draw [style=hadamard edge] (29) to (31);
		\draw [style=hadamard edge] (26) to (28);
		\draw [style=hadamard edge] (27) to (29);
		\draw [style=hadamard edge, bend left] (42) to (44);
		\draw [style=hadamard edge, bend left] (41) to (43);
		\draw [style=hadamard edge, bend left] (41) to (44);
	\end{pgfonlayer}
\end{tikzpicture}
            \caption{Convolution with complexity $\tilde O(2^{r_u + r_v})$}
            \label{conv-uv-figure}
        \end{minipage}
    \end{figure}
\end{proof}

\begin{corollary}
    \label{general-decomp-corollary}
    Under the assumptions from Proposition~\ref{sim-proposition}, the simulation complexity can be upper bounded by $\tilde O(4^R)$, where $R$ is the width of the rank-decomposition $T$.
\end{corollary}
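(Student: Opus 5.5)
The corollary follows directly from Proposition~\ref{sim-proposition} once we bound each term of $\mathrm{flops}(T)$ by $4^R$ and count the terms.

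First, bound the per-node exponent $w_v$. Every cut-rank $\rho_{v,u}$ on an edge of $T$ is at most the width $R$, by definition of the width of a rank-decomposition. So for an internal node $v$ with neighbours $u_1,u_2,u_3$, each pairwise sum $\rho_{v,u_i}+\rho_{v,u_j}$ is at most $2R$, and hence $w_v \le 2R$. For a leaf, $w_v = 0 \le 2R$. It follows that $2^{w_v} \le 4^R$ for every $v \in V_T$.

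Next, count the nodes of $T$. Since $T$ is a tree whose internal nodes all have degree $3$ and whose $|V|$ leaves are in bijection with the vertices of $G$, a standard degree-sum argument gives $|V_T| = 2|V| - 2$. This is linear in the size of the diagram. Therefore
\[
\mathrm{flops}(T) = \sum_{v \in V_T} 2^{w_v} \le (2|V|-2)\, 4^R .
\]

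Finally, the prefactor $2|V|-2$ is polynomial in the input size, so it is absorbed into the $\tilde O$ notation. The simulation complexity $\tilde O(\mathrm{flops}(T))$ from Proposition~\ref{sim-proposition} is thus $\tilde O(4^R)$. The rooting step in that proof adds one dummy node $0$ with $r_0 = 0$, but this changes the node count by only a constant.

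The only step needing any care is confirming that the $\tilde O$ convention absorbs polynomial factors in $|V|$ (and in $R \le |V|$). This is how the paper uses $\tilde O$ throughout, for example in treating Steps 1, 3, 4 and 5 as polynomial time.
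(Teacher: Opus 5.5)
Your proof is correct and follows the paper's argument: each tree-edge cut-rank is at most $R$, so $w_v \le 2R$ at every node; you then multiply by the number of nodes and absorb that polynomial prefactor into $\tilde O$. Your node count $|V_T| = 2|V|-2$ is more precise than the paper's prefactor $|V|$, although the difference does not matter under $\tilde O$.
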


\begin{proof}
    Note that we have $w_v \le 2R$ for each $v \in V_T$. Hence, $\mathrm{flops}(T) \le |V| \cdot 2^{2R} = \tilde O(4^R)$.
\end{proof}

\begin{corollary}
    \label{linear-decomp-corollary}
    Under the assumptions from Proposition~\ref{sim-proposition}, if additionally $T$ is linear, then the simulation complexity can be upper bounded by $\tilde O(2^R)$, where $R$ is the width of $T$.
\end{corollary}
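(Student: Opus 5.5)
Since Proposition~\ref{sim-proposition} bounds the cost by $\tilde O(\mathrm{flops}(T))$, it suffices to show that $w_v \le R$ (up to an additive constant) for every node $v$ of a linear rank-decomposition. Summing over the $O(|V|)$ nodes then gives $\mathrm{flops}(T) \le O(|V|)\cdot 2^{R+O(1)} = \tilde O(2^R)$.

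First I make the tree shape explicit. A linear rank-decomposition given by a permutation $(p_i)$ consists of a bamboo (path) of internal nodes $c_1,\dots,c_{|V|-2}$. Each $c_i$ carries one pendant leaf, and the two ends of the bamboo carry two leaves. Each internal node $c$ thus has three neighbours: two on the bamboo (or a bamboo neighbour and an extra leaf at an end) and one leaf $p_i$. The key observation is that the edge from $c$ to a leaf $p_i$ separates a single vertex from the rest. Its cut-rank is therefore $\rho_G(\{p_i\}) \le 1$, being the rank of a single row.

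Next, for an internal node $v$ with neighbours $u_1,u_2,u_3$, where $u_3$ is the pendant leaf, I pick the pair $(u_1,u_3)$ in the minimum defining $w_v$. This gives $w_v \le \rho_{v,u_1} + \rho_{v,u_3} \le R + 1$. At the two end nodes two of the neighbours are leaves, so $w_v \le 2$. Leaves of $T$ have $w_v = 0$. Hence $\mathrm{flops}(T) \le (2|V|-2)\cdot 2^{R+1} = \tilde O(2^R)$.

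The only point needing care is the claim $\rho_G(\{x\}) \le 1$ together with the correct identification of the pendant edges in the linear tree; both follow directly from the definitions. The rest is bookkeeping inherited from Proposition~\ref{sim-proposition}. I do not expect a real obstacle, beyond making sure the rooting step in that proof (adding the dummy vertex $0$) does not break the bound. It does not: the dummy node has both incident cut-ranks equal to $\rho_{u_0,v_0} \le R$ and contributes a single convolution that is at most $2^{2R}$ in the worst case. To avoid even that, I would root at an edge incident to a leaf, so the dummy node has one side of cut-rank $\le 1$, giving cost $\le 2^{R+1}$.
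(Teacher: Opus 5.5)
Your proof is correct and is essentially the paper's argument. The paper simply asserts $w_v \le R+1$ for linear $T$ and sums over the nodes, and your observation supplies the reason: every internal node of the bamboo has a pendant leaf edge with cut-rank $\rho_G(\{p_i\}) \le 1$. Your worry about the dummy root is unnecessary, because $r_0 = 0$, so its convolution costs at most $2^{\min(2\rho_{u_0,v_0},\, \rho_{u_0,v_0},\, \rho_{u_0,v_0})} = 2^{\rho_{u_0,v_0}} \le 2^R$ for any choice of root edge, and no special rooting is needed.
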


\begin{proof}
    Since $T$ is linear, we have $w_v \le R + 1$. Hence, $\mathrm{flops}(T) \le |V| \cdot 2^{R + 1} = \tilde O(2^R)$.
\end{proof}

\begin{corollary}
    \label{naive-sim-corollary}
    Algorithm~\ref{sim-algorithm} simulates an $n$-qubit quantum circuit in $\tilde O(2^n)$ time for a certain efficiently constructable decomposition $T$.
\end{corollary}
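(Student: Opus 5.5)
We would combine Lemma~\ref{gflow-rw-lemma} with Corollary~\ref{linear-decomp-corollary}. The plan is to run the ``rw-flow'' construction on the \emph{open} reduced diagram of the circuit. Because the rank-decomposition is a purely graph-theoretic object, we can then reuse it once the inputs and outputs are plugged in.

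First, we convert the $n$-qubit circuit $C$ into a ZX-diagram and simplify it to a reduced graph-like diagram $D'$ with $n$ inputs and $n$ outputs, leaving the boundary open. By the result cited in Section~\ref{gflow-subsection}, the open graph $(G', I, O)$ underlying $D'$ has an extended gflow $(g,\prec,\lambda)$, computable in polynomial time. Here $|O| = n$, with boundary wires treated in the usual way as input/output vertices. We then take a linear extension $p$ of $\prec$ by topological sorting. By Lemma~\ref{gflow-rw-lemma}, $p$ is a linear rank-decomposition of $G'$ of width at most $n$.

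Next, we close the diagram by plugging in the basis states $\ket{x}$ and $\bra{y}$. In graph-like form, this attaches single-legged spiders with phases in $\{0,\pi\}$ to the boundary vertices via Hadamard edges. To obtain a graph $G$ with a linear decomposition of width at most $n$, we have two options:
\begin{itemize}
\item Fuse or absorb these states into the boundary spiders. This modifies only the phases of boundary vertices, so $G = G'$ up to phases, and $p$ is a linear rank-decomposition of $G$ of width at most $n$.
\item Alternatively, insert each new degree-one vertex into $p$ immediately adjacent to its unique neighbour. This raises each cut-rank by at most one, and only at the two cuts separating the pair. Even in that case, the width stays within $n+1$, which suffices.
\end{itemize}

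Finally, we apply Proposition~\ref{sim-proposition} via Corollary~\ref{linear-decomp-corollary} to the closed diagram with decomposition $p$. This gives complexity $\tilde O(2^{n+1}) = \tilde O(2^n)$. All preprocessing steps (simplification, gflow, topological sort) are polynomial, so the decomposition is efficiently constructible.

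The main obstacle is the bookkeeping at the boundary. First, we must make sure the reduced diagram of the open circuit genuinely satisfies the hypotheses of the gflow existence theorem, with $|O| = n$. Second, we must check that closing it, and any further Clifford simplification done after plugging in inputs, does not raise the width beyond $n + O(1)$. We would handle this by fixing the decomposition on the open diagram first and then performing only phase-absorbing or leaf-adding operations, which have the cut-rank effect described above.
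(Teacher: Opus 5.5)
Your proposal is correct and is essentially the paper's proof: run rw-flow to get a linear decomposition, bound its width by $n$ with Lemma~\ref{gflow-rw-lemma}, and conclude via Corollary~\ref{linear-decomp-corollary}. Your boundary bookkeeping (absorbing the plugged-in basis states, or inserting each new degree-one vertex next to its neighbour in $p$ at a cost of at most one in the width) makes explicit a point the paper skips: rw-flow needs the open diagram, while the contraction runs on the closed one. The same leaf-insertion trick also covers phase-gadget leaves, if these are not counted as vertices of the open graph.
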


\begin{proof}
    By $(p_i)_{i = 1, \dots, |V(D)|}$ denote the linear rank-decomposition generated by ``rw-flow''. By Lemma~\ref{gflow-rw-lemma}, its width does not exceed $n$. By Corollary~\ref{linear-decomp-corollary}, the complexity of Algorithm~\ref{sim-algorithm} is $\tilde O(2^n)$.
\end{proof}

\begin{corollary}
    \label{t-count-corollary}
    Let $C$ be a quantum circuit consisting of Clifford gates and $t$ non-Clifford phase gates. Then, Algorithm~\ref{sim-algorithm} simulates $C$ in $\tilde O(2^{t / 2})$ time for a certain efficiently constructable decomposition $T$.
\end{corollary}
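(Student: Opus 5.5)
We want a rank-decomposition of the reduced diagram $D$ of width about $t/2$, and then Corollary~\ref{linear-decomp-corollary} (or Corollary~\ref{general-decomp-corollary} with a sharper count) gives the bound. The plan is to exploit the structure of reduced scalar ZX-diagrams from Section~\ref{zxcalc-subsection}. After full simplification, with inputs and outputs plugged in, every vertex of $D$ is one of two kinds: an internal non-Clifford spider, or part of a phase gadget (a non-Clifford leaf attached to a single hub vertex). Hence the number of non-Clifford spiders is at most $t$. The Clifford vertices are only gadget hubs, and each hub is paired with a unique non-Clifford leaf.

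First, I would bound the cut-rank by the number of non-Clifford vertices on the smaller side. Contract each gadget into a single ``unit'' by placing its leaf and hub on the same side of every cut. Let $X$ be a union of units. A hub contributes a row of the biadjacency matrix, and so does a non-gadget non-Clifford spider. A leaf only neighbours its own hub, so it contributes a zero row. Thus $\rho_D(X)$ is at most the number of units in $X$. Since $\rho_D(X)=\rho_D(V\setminus X)$, we get
\[
\rho_D(X)\le \min(k_X,\,k-k_X),
\]
where $k\le t$ is the number of units and $k_X$ is the number of units in $X$.

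Next, build a linear rank-decomposition by ordering the units arbitrarily. Within each gadget, list the leaf immediately before its hub. For cuts between units, the bound above gives width at most $\min(k_X,k-k_X)\le \lceil k/2\rceil\le \lceil t/2\rceil$. The cut that falls between a leaf and its hub adds at most one to the rank, so every cut has rank at most $t/2+1$. Corollary~\ref{linear-decomp-corollary} then gives $\tilde O(2^{t/2+1})=\tilde O(2^{t/2})$. The decomposition is clearly constructible in polynomial time, since the reduction itself is polynomial.

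The main obstacle is justifying the structural claim that each Clifford vertex left in the reduced scalar diagram is a gadget hub adjacent to exactly one non-Clifford leaf, with no other Clifford vertices remaining. This follows from the exhaustive application of local complementation and pivoting described for reduced diagrams. The remaining cases to handle are boundary-adjacent pivots, which disappear once the inputs and outputs are plugged in, and possible leftover Pauli-phase vertices. I would argue the latter can be removed by further pivoting, or else each one absorbed as an extra unit without increasing the count beyond $t$. I also need to double-check the off-by-one between a leaf and its hub. If that fails, I would instead merge each gadget into a single leaf of a generic decomposition and use Proposition~\ref{sim-proposition} directly, where $w_v$ sums the two smaller incident cut-ranks.
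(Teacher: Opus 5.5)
Your argument is correct and is essentially the paper's: treat each phase gadget as a single unit alongside the internal non-Clifford spiders, order the $t'\le t$ units linearly so that the cut-ranks satisfy $r_i\le\min(i,t'-i)\le t'/2$, and bound $\mathrm{flops}$ as in Corollary~\ref{linear-decomp-corollary}. The paper instead splits each gadget node into two children, giving a non-linear tree of width $\le t'/2$, whereas you place each leaf next to its hub in the linear order. That costs at most $+1$ in the width, which $\tilde O$ absorbs, and it lets you apply Corollary~\ref{linear-decomp-corollary} verbatim. The structural fact you call the main obstacle is taken by the paper directly from the definition of a reduced scalar diagram (only internal non-Clifford spiders and phase gadgets remain), so cite it rather than rely on your fallback: absorbing leftover Pauli vertices as extra units would not obviously keep the unit count at most $t$.
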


\begin{proof}
    First, note that the ZX-diagram of $C$ has $t$ non-Clifford spiders. There are $t' \le t$ non-Clifford spiders in its reduced ZX-diagram $D$ because the ZX simplifications can only eliminate them. Since $D$ is closed, each Clifford spider is a part of a phase gadget. Consider an arbitrary linear rank-decomposition $(p_i)_{i = 1, \dots, t'}$ of the internal non-Clifford spiders and the phase gadgets (treated as single nodes). It extends to a (nonlinear) rank-decomposition $T$ of $D$ by adding two children to each phase gadget node. Note that the width of $(p_i)$ is at most $t' / 2$, since its cut-ranks $r_i$ satisfy $r_i \le \min(i, t' - i)$. Thus, the width of $T$ is also bounded by $t' / 2$. Estimating $\mathrm{flops}(T)$ as in Corollary~\ref{linear-decomp-corollary}, it follows that Algorithm~\ref{sim-algorithm} simulates $D$ in $\tilde O(2^{t' / 2})$ time.
\end{proof}

\section{Benchmarks}
\label{bench-section}

We evaluate our methods on various families of quantum circuits and ZX-diagrams. Instead of running the contraction routines directly, we estimate the number of \emph{flops} -- floating-point operations required for the contractions. We benchmark the ``rw-flow'', ``rw-greedy-linear'', and ``rw-greedy-b2t'' routines from Section~\ref{rw-heur-section}, as well as the na\"ive statevector-like simulation routine \texttt{tensorfy} provided by the PyZX library. We compare their performance against Quimb, a library containing analogous tensor contraction routines for hypergraphs. Similar to the ZX-based methods, Quimb first transforms a quantum circuit into a simplified tensor network. Then, it runs an optimiser that finds an appropriate sequence of tensor contractions (aka a contraction tree). We have decided to use the high-quality ``auto-hq'' optimiser, which takes a longer time to converge on larger instances but aims to achieve efficient contraction trees.

\subsection{Random CNOT + H + T circuits}

We generate a family of random CNOT + H + T circuits by iteratively appending CNOT, Hadamard, and T gates with probabilities 0.6, 0.2, and 0.2, respectively. First, we fix the qubit count to 10 and vary the number of gates (Figure~\ref{bench-random-Q10-figure}). Then, we set the gate count to $3 \cdot \mathrm{n\_qubits}^2$ and vary the number of qubits (Figure~\ref{bench-random-Q2-15-figure}). This scaling was chosen to ensure circuits do not become too sparse as the qubit count is increased, trivialising the classical simulation task. For each configuration of $(\mathrm{n\_qubits}, \mathrm{n\_gates})$, the results were averaged over five independent runs.

\begin{figure}[htbp]
    \centering
    \begin{subfigure}[c]{0.4\textwidth}
        \includegraphics[width=\textwidth]{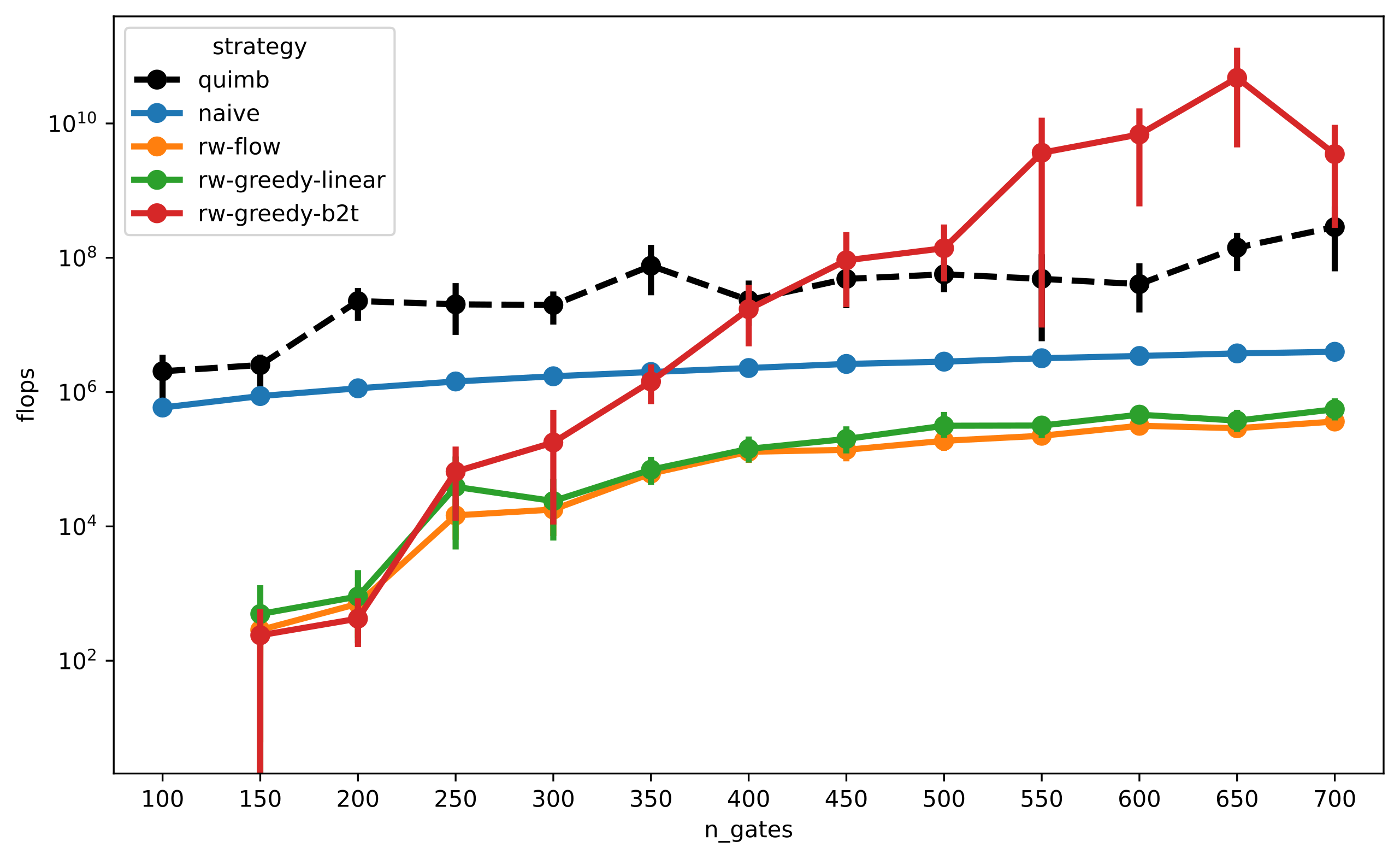}
        \caption{for $\mathrm{n\_qubits} = 10$}
        \label{bench-random-Q10-figure}
    \end{subfigure}
    \begin{subfigure}[c]{0.4\textwidth}
        \includegraphics[width=\textwidth]{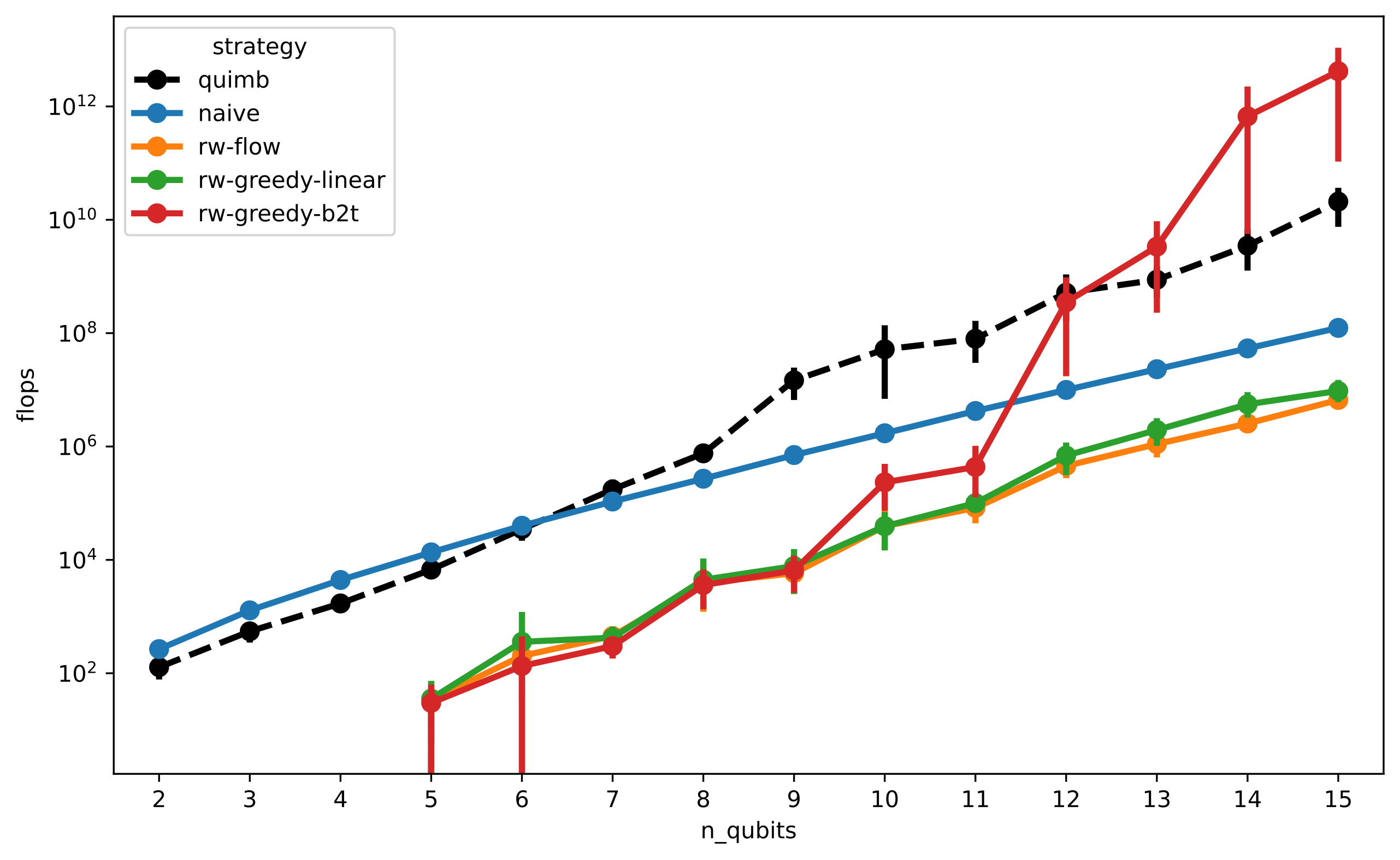}
        \caption{for $\mathrm{n\_gates} = 3 \cdot \mathrm{n\_qubits}^2$}
        \label{bench-random-Q2-15-figure}
    \end{subfigure}
    \caption{Benchmark on random CNOT + H + T circuits}
\end{figure}

When the number of qubits is fixed, the naive simulation appears to outperform Quimb by a constant factor of 10 on almost all instances. Our routines are very efficient for the shallow circuits since their T-count is small. For the deeper ones, ``rw-flow'' and ``rw-greedy-linear'' level off at the 100-times-faster mark compared to Quimb, but ``rw-greedy-b2t'' starts to slow down rapidly. This is likely due to the large cut-ranks appearing at the top of the rank-decomposition tree, since ``rw-greedy-b2t'' does not take them into account while building the decomposition from bottom to top.

When the number of qubits is varied, we observe that each simulation routine slows down exponentially. This is because for sufficiently dense random circuits, their rank-width appears to be close to the number of qubits. Thus, rank-width--based methods do not constitute a significant improvement over the naive method in this case.

\subsection{Structured circuits}

We run our routines on a set of structured circuits taken from the PyZX repository\footnote{\url{https://github.com/zxcalc/pyzx}}, which are mainly based on the T-Par~\cite{amy2014polynomial} benchmark circuits\footnote{\url{https://github.com/meamy/t-par}}. Since the majority of circuits represent a classical computation, we set the inputs and outputs of these circuits to the T-states. Figure~\ref{bench-structured-figure} illustrates the performance of our methods relative to the baseline.

On every circuit from the dataset, our methods in combination perform similarly or faster than Quimb. In the majority of cases, ``rw-greedy-b2t'' performs best and sometimes beats Quimb by the factor of $10^4$. This demonstrates the efficiency of our simulation algorithm in various practical examples, for suitably chosen rank-decompositions.

\begin{figure}[htbp]
    \centering
    \includegraphics[width=0.75\textwidth]{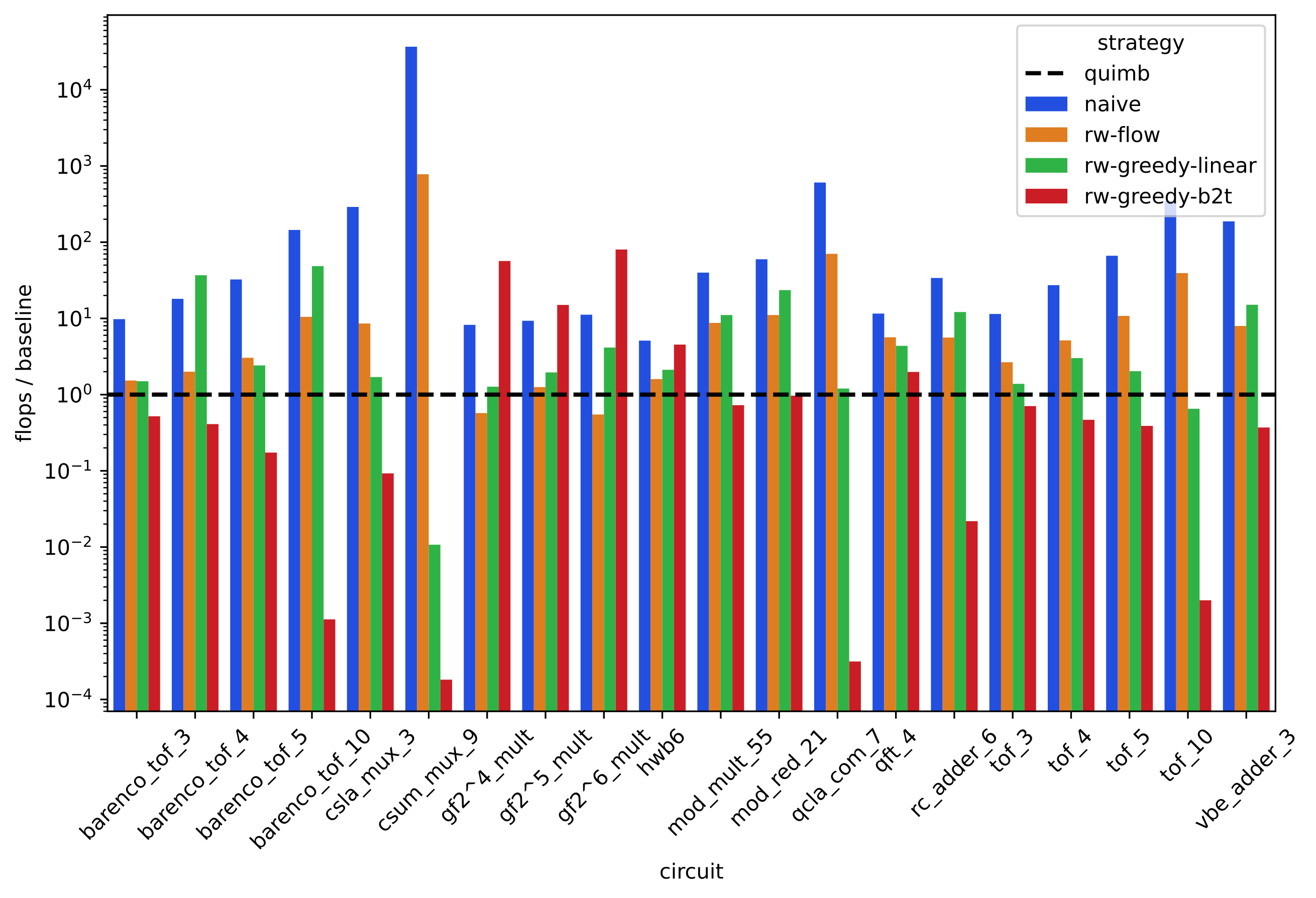}
    \caption{Benchmark on structured circuits from the PyZX repository}
    \label{bench-structured-figure}
\end{figure}

\subsection{Multi-qubit Toffoli}

The benchmark above motivates us to consider the family of circuits implementing an $n$-qubit Toffoli gate (aka the multi-controlled NOT gate). It is well-known that these circuits have stabiliser rank 2, meaning they are efficiently simulable. As before, we set the circuit inputs and outputs to T-states to avoid total simplification. The results are shown in Figure~\ref{bench-tof-figure}.

Notably, the performance of ``rw-greedy-b2t'' appears to scale polynomially with $n$, while the other methods result in exponential scaling. This again demonstrates the efficiency of our simulation algorithm, in combination with our greedy bottom-to-top heuristic, in certain practical cases.

\subsection{Random ZX-diagrams}

We simulate Erdős-Rényi random ZX-diagrams for different edge probabilities $p$ (Figure~\ref{bench-random-ZX-figure}). In particular, these ZX-diagrams are generated by creating $n$ green spiders with $\pi / 4$ phases and adding Hadamard edges between each pair of them with probability $p$. Since these ZX-diagrams do not originate from a circuit and thus have no causal flow/gflow, we can only run ``rw-greedy-linear'' and ``rw-greedy-b2t'' on these instances.

As the density of the ZX-diagram increases, Quimb's flops approach $2^n$. Notably, the performance of our rank-width--based methods is symmetric with respect to $p$. This is because cut-ranks are invariant under edge complement, which corresponds to flipping edge probabilities: $G(n, p) \mapsto G(n, 1 - p)$. Also, the complexity of ``rw-greedy-linear'' is always bounded by $\tilde O(2^{n / 2})$ in this case, since it produces linear decompositions of width at most $n / 2$.

\begin{figure}[htbp]
    \centering
    \begin{minipage}{0.45\textwidth}
        \centering
        \includegraphics[width=\textwidth]{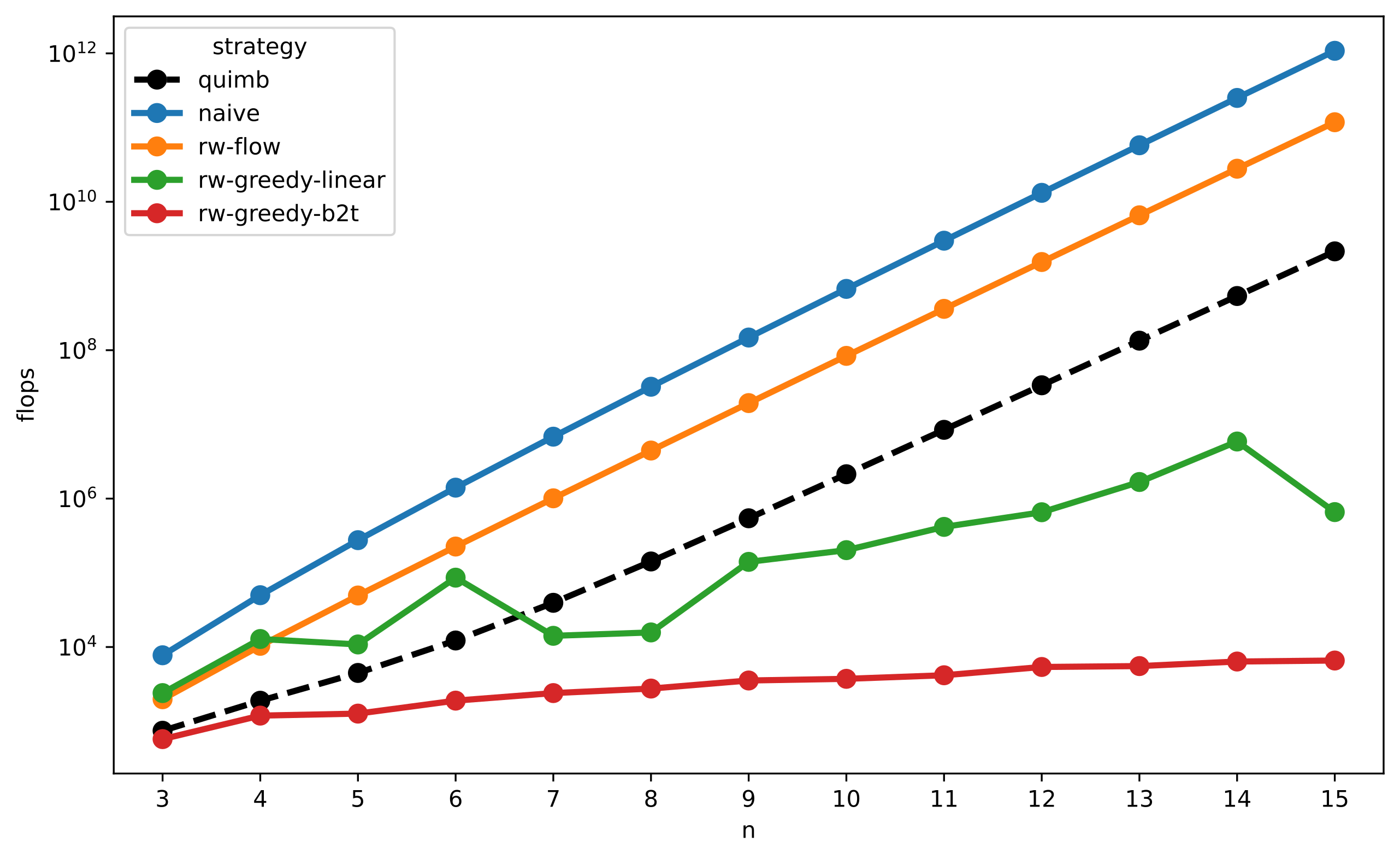}
        \caption{Simulating $n$-qubit Toffolis}
        \label{bench-tof-figure}
    \end{minipage}
    \ \ \ \
    \begin{minipage}{0.45\textwidth}
        \centering
        \includegraphics[width=\textwidth]{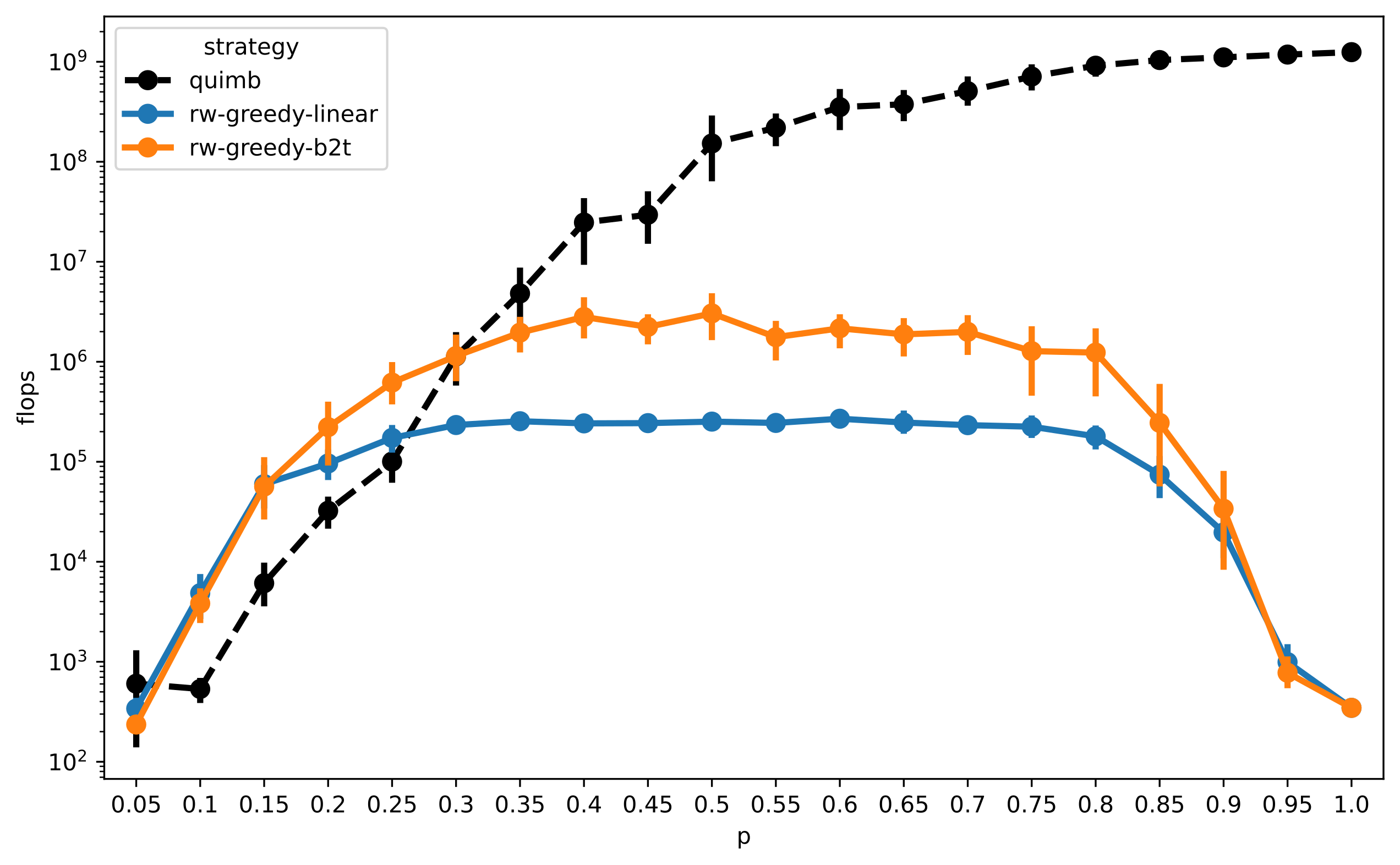}
        \caption{Simulating $G(30, p)$ ZX-diagrams}
        \label{bench-random-ZX-figure}
    \end{minipage}
\end{figure}

\section{Acknowledgements}

AK would like to acknowledge support from the Engineering and Physical Sciences Research Council grant number EP/Z002230/1, \textit{(De)constructing quantum software (DeQS)}. The authors would like to thank Julien Codsi and Tuomas Laakkonen for useful discussions on rank decompositions and classical simulation.

\bibliographystyle{eptcs}
\bibliography{references.bib}

\appendix

\section{Rank-decomposition routines}
\label{rw-heur-appendix}

Here we provide the missing proofs, implementations, and complexity analysis of our rank decomposition routines. Section~\ref{rw-flow-appendix} contains the proof of Lemma~\ref{gflow-rw-lemma}. In Sections~\ref{rw-greedy-linear-appendix},~\ref{rw-greedy-b2t-appendix}, we provide the pseudocode of ``rw-greedy-linear'' and ``rw-greedy-b2t'' and do their complexity analysis.

\subsection{rw-flow}
\label{rw-flow-appendix}

\begin{proof}[Proof of Lemma~\ref{gflow-rw-lemma}]
    We prove that $p_i$ constitutes a linear rank-decomposition of width $\le |O|$ by induction from right to left. By $M^{(i)}$ denote the adjacency matrix between $\{p_{i+1}, \dots, p_{|V|}\}$ and $\{p_1, \dots, p_i\}$, respectively, and set $r_i := \mathrm{rk}(M^{(i)})$. Let us show that for each $i \in [2; |V|]$, we have $r_{i-1} \le r_i$ for $p_i \not\in O$ and $r_{i-1} \le r_i + 1$ for $p_i \in O$.

    Note that $M^{(i-1)}$ can be obtained from $M^{(i)}$ by deleting its last column and appending a row, representing the edges between $p_i$ and $p_1, \dots, p_{i-1}$. It follows that $r_{i-1} \le r_i + 1$. We now assume $p_i \not\in O$ and show $r_{i-1} \le r_i$ by considering the following two cases. It then follows that $\max(r_i) \le |O|$, as desired.

    \begin{enumerate}[wide]
        \item $\lambda(p_i) = \mathrm{XY}$.
            By Definition~\ref{ext-gflow-definition}, we have $p_i \not\in g(p_i)$, $p_i \in \mathrm{Odd}(g(p_i))$, and for each $j < i$ we have $p_j \not\in \mathrm{Odd}(g(p_i))$. Take some $u \in g(p_i)$ and add all the rows from $g(p_i) \setminus \{u\}$ to the row $u$, so that it has a single one at the $i$-th column. Then, zero out the entire $i$-th column so that $M^{(i)}$ becomes block-diagonal. See Figure~\ref{adj-mat-XY-figure} for illustration. Now consider the submatrix $\widetilde M$ between $\{p_{i+1}, \dots, p_{|V|}\}$ and $\{p_1, \dots, p_{i-1}\}$, which corresponds to the `after--before' section in the first diagram of Figure~\ref{adj-mat-XY-figure}. Note that the row operations preserve the ranks of $\widetilde M$, $M^{(i)}$, and $M^{(i-1)}$. Due to the diagonalisation of $M^{(i)}$ as shown in the last diagram of Figure~\ref{adj-mat-XY-figure}, we have $\mathrm{rk}(\widetilde M) = \mathrm{rk}(M^{(i)}) - 1$. As $M^{(i-1)}$ differs from $\widetilde M$ by one extra row, we have $\mathrm{rk}(M^{(i-1)}) \le \mathrm{rk}(\widetilde M) + 1 = \mathrm{rk}(M^{(i)})$.
        \item $\lambda(p_i) \in \{\mathrm{XZ}, \mathrm{YZ}\}$. By Definition~\ref{ext-gflow-definition}, we have $p_i \in g(p_i)$, and for each $j < i$ we have $p_j \not\in \mathrm{Odd}(g(p_i))$. Hence, by adding the rows from $g(p_i) \setminus \{p_i\}$ to the row $p_i$, we can zero out its `before' part -- see Figure~\ref{adj-mat-Z-figure}. Define $\widetilde M$ as in the previous case. Clearly, $\mathrm{rk}(M^{(i-1)}) = \mathrm{rk}(\widetilde M) \le \mathrm{rk}(M^{(i)})$.
    \end{enumerate}

    \begin{figure}[htbp]
        \centering
        \begin{subfigure}[c]{0.25\textwidth}
            \includegraphics[width=\textwidth]{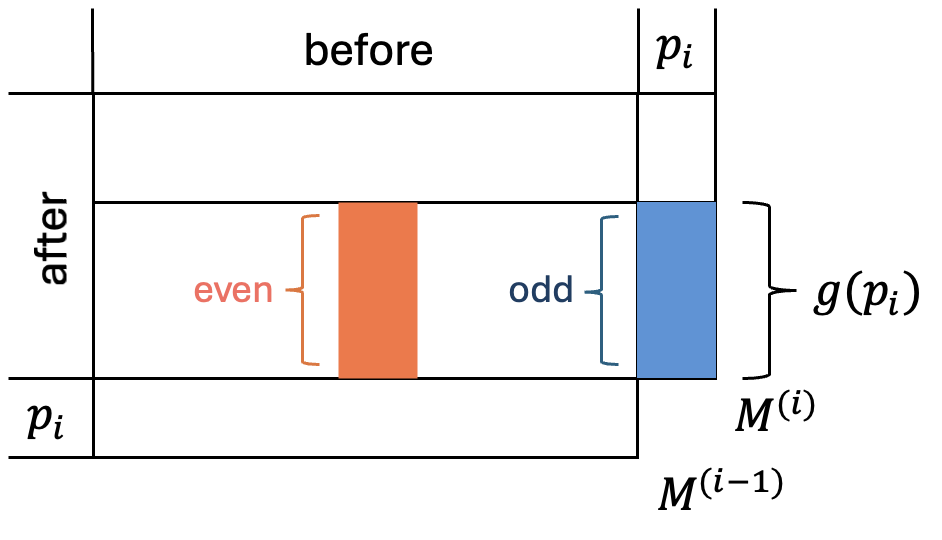}
        \end{subfigure}
        \begin{subfigure}[c]{0.05\textwidth}
            \includegraphics[width=\textwidth]{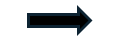}
        \end{subfigure}
        \begin{subfigure}[c]{0.25\textwidth}
            \includegraphics[width=\textwidth]{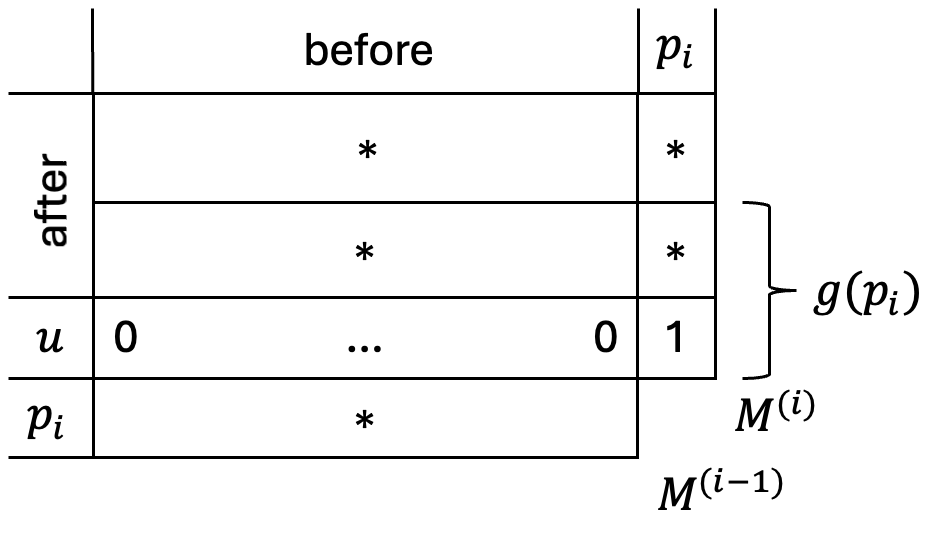}
        \end{subfigure}
        \begin{subfigure}[c]{0.05\textwidth}
            \includegraphics[width=\textwidth]{figures/RWFlow_right_arrow.png}
        \end{subfigure}
        \begin{subfigure}[c]{0.25\textwidth}
            \includegraphics[width=\textwidth]{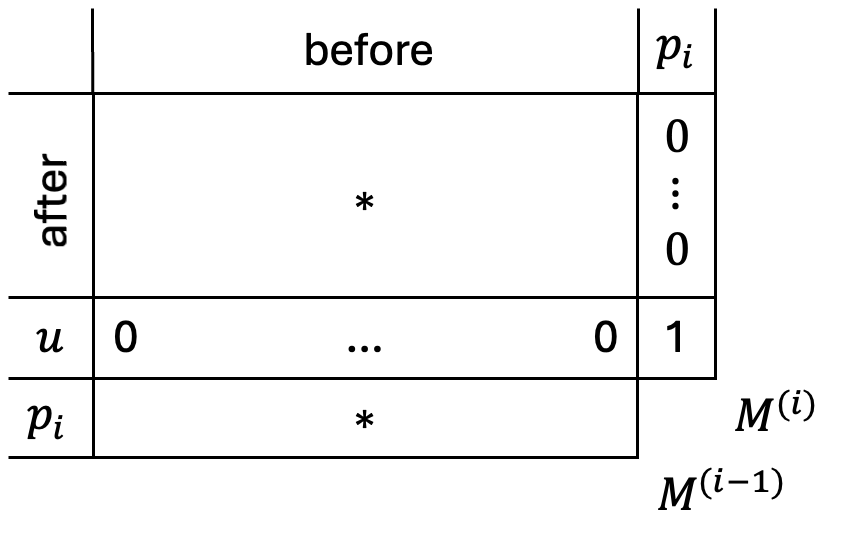}
        \end{subfigure}
        \caption{Block diagonalisation of the adjacency matrix in the XY case.}
        \label{adj-mat-XY-figure}
    \end{figure}

    \begin{figure}[htbp]
        \centering
        \begin{subfigure}[c]{0.3\textwidth}
            \includegraphics[width=\textwidth]{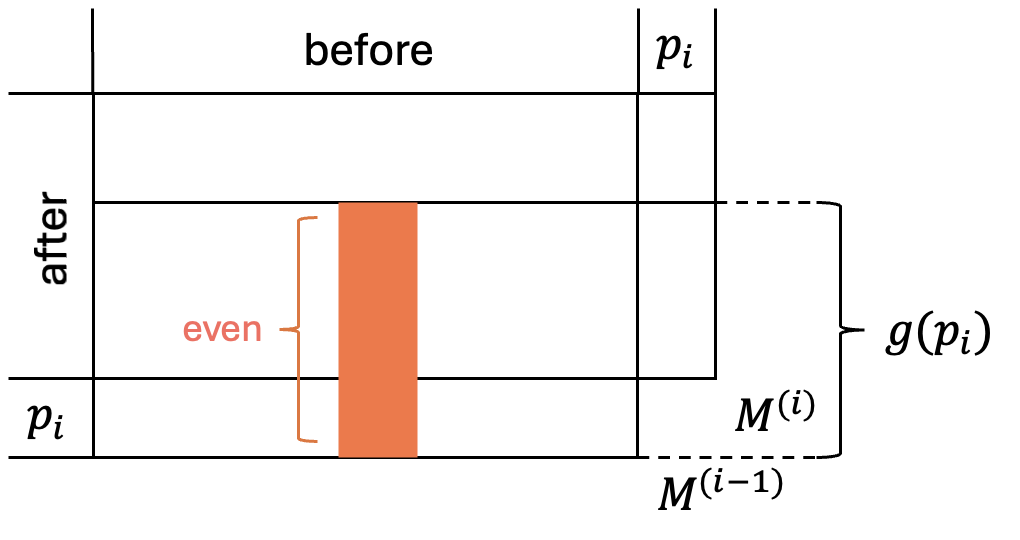}
        \end{subfigure}
        \begin{subfigure}[c]{0.05\textwidth}
            \includegraphics[width=\textwidth]{figures/RWFlow_right_arrow.png}
        \end{subfigure}
        \begin{subfigure}[c]{0.3\textwidth}
            \includegraphics[width=\textwidth]{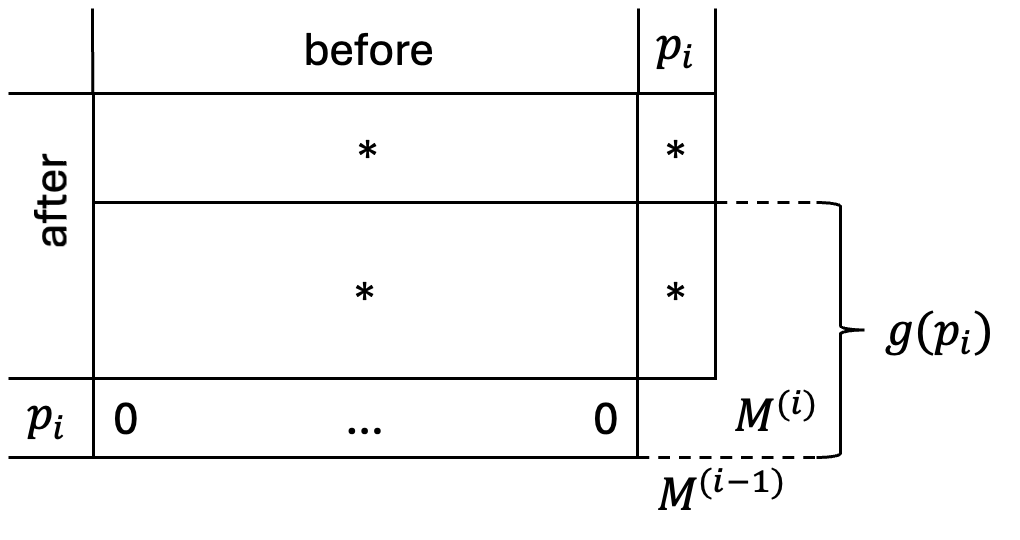}
        \end{subfigure}
        \caption{Matrix row reduction in the XZ and YZ cases.}
        \label{adj-mat-Z-figure}
    \end{figure}
\end{proof}

\subsection{rw-greedy-linear}
\label{rw-greedy-linear-appendix}

The pseudocode of ``rw-greedy-linear'' is given by Algorithm~\ref{naive-rw-greedy-linear-algorithm}. Note that each cut-rank calculation costs $O(|V|^3)$ operations (e.g. by doing Gaussian elimination). Since at each iteration there are $O(|V|)$ choices for $u$, the complexity of Algorithm~\ref{naive-rw-greedy-linear-algorithm} is $O(|V|^5)$. It is possible to reduce the complexity to $O(|V|^4)$, and we have implemented the optimised routine in PyZX.\footnote{\url{https://github.com/zxcalc/pyzx/blob/5f5e409/pyzx/rank\_width.py\#L172}}

\begin{algorithm}
    \caption{rw-greedy-linear}
    \begin{algorithmic}
        \Require $G = (V, E)$
        \Ensure $(p_i)_{i = 1, \dots, |V|}$ -- linear rank-decomposition
        \State $p \gets \mathrm{list}()$
        \For{$i = 1 \dots |V|$}
        \State $S \gets \{p_1, \dots, p_{i-1}\}$
        \State $M \gets \mathrm{Mat}_G(S, V \setminus S)$
        \If{$M = 0$}
        \State $p_i \gets \min(V \setminus S)$
        \Else
        \State $p_i \gets \mathrm{argmin}_{u \in \mathrm{pivotCols}(M)} \,\rho_G(S \cup \{u\})$
        \EndIf
        \EndFor
    \end{algorithmic}
    \label{naive-rw-greedy-linear-algorithm}
\end{algorithm}

\subsection{rw-greedy-b2t}
\label{rw-greedy-b2t-appendix}

The pseudocode of ``rw-greedy-b2t'' is given by Algorithm~\ref{naive-rw-greedy-b2t-algorithm}. Note that for each pair of subtrees $(S_u, S_v)$, computing the cut-rank $\rho_G(S_u \cup S_v)$ costs $O((|S_u| + |S_v|)|V|^2)$ operations. Summing over all $u, v$, each iteration cost is bounded by
\begin{align*}
    \sum_{u, v} O((|S_u| + |S_v|)|V|^2) = O\left(\sum_u |S_u| \cdot |V|^3\right) = O(|V|^4),
\end{align*}
hence the complexity of Algorithm~\ref{naive-rw-greedy-b2t-algorithm} is $O(|V|^5)$. Similarly to ``rw-greedy-linear'', it is possible to reduce the complexity to $O(|V|^4)$, and we have implemented the optimised routine in PyZX.\footnote{\url{https://github.com/zxcalc/pyzx/blob/5f5e409/pyzx/rank\_width.py\#L203}}

\begin{algorithm}
    \caption{rw-greedy-b2t}
    \begin{algorithmic}
        \Require $G = (V, E)$
        \Ensure $T$ -- rank-decomposition tree
        \State $S \gets \mathrm{dict}()$ \Comment{for each tree root in $T$, set of leaves in its component}
        \State $T \gets \mathrm{tree}()$ \Comment{forest of partial decompositions}
        \For{$i = 1 \dots |V|$}
        \State $S_i \gets \{i\}$
        \State $T.\mathrm{newVertex}()$ \Comment{initially, $T$ has $|V|$ isolated vertices numbered $1, \dots, |V|$}
        \EndFor
        \For{$t = 1 \dots |V| - 1$}
        \State $i, j \gets \mathrm{argmin}_{u, v:\,S_u \cap \mathrm{pivotCols}(\mathrm{Mat}_G(S_v, V \setminus S_v)) \neq \varnothing} \,\rho_G(S_u \cup S_v)$
        \State $k \gets T.\mathrm{newVertex}()$ \Comment{create a new root $k$ with children $i, j$}
        \State $T.\mathrm{addEdge}(k, i)$
        \State $T.\mathrm{addEdge}(k, j)$
        \State $S_k \gets S_i \cup S_j$ \Comment{update sets of leaves}
        \State $S.\mathrm{remove}(i)$
        \State $S.\mathrm{remove}(j)$
        \EndFor
    \end{algorithmic}
    \label{naive-rw-greedy-b2t-algorithm}
\end{algorithm}

\end{document}